\date{\today}
\newcommand{\GSTs}[0]{\mathcal S_{\alpha,\beta}^{a,b}}
\newcommand{\GSTIs}[0]{\widetilde{\mathcal S}_{\alpha,\beta}^{a,b}}
\numberwithin{equation}{section}
\title{Sundman-like transformations and the NRT nonlinear Schrödinger equation}
\author[1]{P. R. Gordoa}
\author[1]{A. Pickering}
\author[1,2]{D. Puertas-Centeno}
\author[1]{E. V. Toranzo}
\affil[1]{Departamento de Matemática Aplicada, Ciencia e Ingeniería de los Materiales y Tecnología Electrónica, Universidad Rey Juan Carlos,
	28933 Móstoles (Madrid), Spain}
\affil[2]{Data, Complex Networks and Cybersecurity Research Institute, Universidad Rey Juan Carlos, 28028 (Madrid), Spain}
\newtheorem{defi}{Definition}
\newtheorem{lemma}{Lemma}
\newtheorem{coro}{Corollary}
\newtheorem{remark}{Remark}
\newtheorem{teorema}{Theorem}
\newtheorem{prop}{Proposition}
\begin{document}
	\maketitle

\begin{abstract}
We present a new generalization of the well-known power-type Sundman transformation, involving not only powers of the function but also of its derivative, along with its inverse. Our aim is to explore the use of such transformations in the derivation of solutions of ordinary differential equations and in the study of their properties. We then show their  usefulness in the framework of the nonlinear Nobre--Reigo-Monteiro--Tsallis (NRT) nonlinear Schr\"odinger equation. More precisely, we employ them to analyze a family of  ordinary differential equations which includes the Lorentzian solutions of the NRT-nonlinear Schrödinger equation for a constant potential.  Moreover, an explicit expression for the Lorentzian solitary wave solutions is given, for any real value of the non-linearity parameter $q$, in terms of a transformation depending on $q$ applied to the classical Lorentzian solution with $q=1$, i.e., we succeed in encapsulating the whole nonlinear behavior in the new transformations. In addition, the composition of this transformation with its inverse (with different parameters) allows us to perform a shift in the nonlinearity parameter $q$. Moreover, a certain subfamily of our generalized transformations, which perform a shift on the non-linearity parameter $q$ of the Lorentzian solutions, is found to have a group structure.  The same subfamily of transformations allows us, again, to perform a shift in the non-linearity parameter $q$, but in this case in the traveling wave solution for a free particle.
\end{abstract}

\section{Introduction}
In the study of nonlinear ordinary differential equations, a huge variety of techniques are available in order to obtain solutions, perhaps by reducing to a series of quadratures, or at least to obtain reductions in order, e.g., through the use of Lie symmetries, or via mappings to more manageable equations. One class of mapping often used is the power-type Sundman transformation \cite{Sud13}, see also \cite{DMS94}. Examples of the application of such transformations to physically interesting problems can be found, for example, in \cite{KS14,KS15,MR16,PG21,GPPT25}.

In this paper we define a generalization of the power-type Sundman transformation, involving not only powers of the function but also of its derivative. We also define its inverse transformation. We see here how these new transformations, which depend on four parameters, can be used in the study of ordinary differential equations. It turns out that for certain classes of autonomous equation of order $n\geq2$, the resulting transformed equations are also autonomous but of order $n-1$. This is then in contrast with the standard result when using Lie symmetries whereby the order of an autonomous equation can always be reduced by one, but at the price of passing to a non-autonomous equation. In this paper we use our generalized Sundman transformations to reduce a certain family of second order autonomous nonlinear ordinary differential equations to first order autonomous ordinary differential equations. It is worth mentioning that this new class of transformations also generalizes a pair of  biparametric transformations recently introduced in a totally different context, the framework of information theory~\cite{IP2025(c)}, and have been shown to be a useful tool in the study of informational inequalities~\cite{IP2025,IP2025(b)}. We now describe in more detail the problem tackled here.

In the framework of statistical mechanics, the extension of extensive statistical mechanics to a non-extensive formalism developed by Tsallis has given rise to an extraordinary amount of theoretical and experimental results since the 1980s. Later, this formalism inspired the introduction (amongst other equations) of a nonlinear version of the Schrödinger equation~\cite{NRT11}, namely the Nobre--Reigo-Monteiro--Tsallis (NRT) nonlinear Schr\"odinger equation, with an additional term $\Xi(\psi(x,t))$ proposed by Rego-Monteiro~\cite{REG20b}:
\begin{align}
\label{eq:Psi_eq+gen}
i\hbar \frac{\partial}{\partial t}\left[\psi(\vec{x},t) \right]	&= -\frac{1}{2-q}\frac{\hbar^2}{2m} \nabla^{2}\left[ \left(\psi(\vec{x},t)\right) ^{2-q} \right] +\Xi(\psi(x,t))\, , \quad q\in\mathbb{R}\setminus\{1,2\}\, .
\end{align} 
%
This equation has been studied for different particular functions $\Xi$ in~\cite{NRT12,TPDP13,PLA14,REG20a}. We  recall here that the probability density associated to the solution of the \textit{classical} Schrödinger equation is given  using the square of the modulus of the wave function, as established by Born himself  in the early stages of quantum physics. In stark contrast to this result, in the NRT formalism, the probability density is obtained  using the product of the wave function and an auxiliary, or slave, field~\cite{REG13}.
 
In the monodimensional case, for $\Xi(\psi(x,t))=\lambda\psi(x,t)^q$, Rego-Monteiro  found a Lorentzian solitary wave behavior for the energy density of a traveling-wave solution~\cite{REG20b}. 
This kind of solution is one focus of the work presented here. In particular, letting $z=i(kx-\omega t)$  in Eq.~\eqref{eq:Psi_eq+gen} with $\Xi(\psi(x,t))=\lambda\psi(x,t)^q$ yields~\cite{REG20b}
 \begin{align}
 \label{eq:Psi_eq+travRM}
 \hbar \omega \psi'(z) &= \frac{\hbar^2k^2}{2m}\left[\psi^{1-q}\psi''(z)+(1-q)\psi^{-q}(z)(\psi'(z))^2\right]+\lambda\psi(z)^q\, .
 \end{align} 
 Note that in the case $\lambda=0$ we recover the traveling-wave reduction studied by Bountis and Nobre~\cite{BN16} and whose general solution has  recently been analyzed  by the current authors~\cite{GPPT25} for any real value of the nonlinearity parameter $q$. Moreover, a closed-form expression for  the slave field and the probability density were also obtained through the use of a family of special functions called \textit{generalized trigonometric functions}~\cite{DRA99}. 
 
 In addition, we will also study in this paper the solutions of equations of the general form
   \begin{align}
  \label{eq:Psi_eq+travGen}
  \hbar \omega \psi'(z) &= \frac{\hbar^2k^2}{2m}\left[\psi^{1-q}\psi''(z)+(1-q)\psi^{-q}(z)(\psi'(z))^2\right]+\Xi(\psi, \psi')\, 
  \end{align}
  where 
  \begin{align}\label{eq:Gamma_cond}
\Xi(\psi, \psi')&=\hbar \omega \psi'(z)-\frac{(\psi')^2}{\psi^{q}}\,\overline g\left(\psi^\gamma \psi'\right)
  \end{align}
for some $\gamma\in\mathbb R$ and  where $\overline g$ is an arbitrary function. In the particular case where  $\gamma=-q$ and $\overline g(\xi)=\frac{\hbar \omega}\xi-\frac\lambda{\xi^2}$ in Eq.~\eqref{eq:Gamma_cond}, so that  $\Xi(\psi,\psi')=\lambda\psi(z)^q$,  we recover Eq.~\eqref{eq:Psi_eq+travRM}. This is the unique case in which the condition~\eqref{eq:Gamma_cond} is satisfied by $\Xi(\psi,\psi')=\Xi(\psi)$, i.e., where $\Xi$ is a function of $\psi$ only  Here we will  give the general solution of Eq.~\eqref{eq:Psi_eq+travGen} under the condition~\eqref{eq:Gamma_cond}, along with  some interesting mathematical properties relating  solutions for different values of the non-linearity parameter $q$.
 We also find, unexpectedly, that  the general solution of  Eq.~\eqref{eq:Psi_eq+travRM} can be written as a Sundman transformation of the classical (or linear) solution, i.e, corresponding to $q=1$. This fact allows us to capture  the whole non-linearity character in the transformation itself. Indeed, the transformation plays the role of a general shift relation between solutions for  different values of the parameter~$q.$ 


The layout of the paper is as follows. In Section 2 we define our generalized Sundman transformation (GST) and its inverse (IGST), which depend on four parameters, and discuss some of their properties. In particular, we find that in a special case where one of the parameters is zero, the GSTs form a  group under composition. For this special case we also consider applications to generalized trigonometric functions. Also in Section 2, we discuss how GSTs can be employed in order to derive solutions of certain classes of autonomous ordinary differential equations. In Section 3, we consider their application to precisely such a class of second order equations, thus obtaining results to be used later, as well as presenting a simple explicit example. It is in Section 4 that we apply our techniques to the study of the NRT nonlinear Schr\" odinger equation, and obtain, amongst others, the results mentioned above. The last section is devoted to a summary of our results and a brief discussion of possible future work.


\section{Generalized Sundman transformations}

In this section we focus on power-type transformations involving both the function and its derivative.

\subsection{Definitions and basic properties}
In this section we analyze the transformation which we will call the generalized Sundman transformation of power type and its inverse. We begin with their definitions, and then discuss their properties and uses.

\begin{defi}[Direct power-type generalized Sundman transformation (GST)]\label{def:GSTs} 
Let $\alpha,\beta,a,b,\Lambda\in\mathbb{R}$, $\Lambda\neq 0$. We define the transformed function $Y(s)$ of a differentiable function $y=y(x)$ as

\begin{equation}\label{eq:GSTs1}
Y(s) =\,_{\Lambda}\mathcal S_{\alpha,\beta}^{a,b}[y](s)\equiv \,
y^\alpha(x)y'^\beta (x)
\end{equation}

where

\begin{equation}\label{eq:GSTs2}
s'(x)=\Lambda^{-1}y^a(x)y'^b(x),
\end{equation}
and where $'$ denotes a derivative with respect to $x$. $Y(s)=\,_{\Lambda}\mathcal S_{\alpha,\beta}^{a,b}[y](s)$, or $Y=\,_{\Lambda}\mathcal S_{\alpha,\beta}^{a,b}[y]$, will be called the \textit{direct generalized Sundman transformation (or transform)} of the function $y$.
\end{defi}

\begin{remark}\label{rem_GST_2}
	In the above definition $x=x(s)$ is obtained 
	by inverting the result of the integration $\int \Lambda^{-1}y^a(x)y'^b(x)\, dx=s+C$, 
	where $C$ is an arbitrary constant, and so $Y(s)$ is defined modulo a shift in its argument.
\end{remark}

\begin{remark}
    Note that if one lets $\beta=b=0$ in Definition~\ref{def:GSTs} the power-type Sundman transformation is trivially recovered. On the other hand, taking $\beta=0$ and $b\neq0$ we obtain a transformation which involves the first power of the derivative $y'$ in the equation governing the change of independent variable. Other types of GSTs with polynomials of first degree in $y'$ have, in fact, already been introduced in the study of differential equations \cite{CSL06,MR11}. However, as far as we know, the general case with four parameters $(\alpha,\beta,a,b)$ and $ \beta\neq0$ is employed here for the first time. The case with the parameter $\beta\neq0$ severely complicates the definition of the inverse transform, which we give below.
\end{remark}


\noindent
The definition of the inverse generalized Sundman transform is as follows: 

\begin{defi}[Inverse power-type generalized Sundman transformation (IGST)]\label{def:invGSTs}
Let $\alpha,\beta,a,b,\Lambda\in\mathbb R$, $\beta\neq 0$ and $\Lambda\neq0$. We define the inverse generalized Sundman transformation $_{\Lambda}\widetilde{\mathcal S}_{\alpha,\beta}^{a,b}$ of a function $Y=Y(s)$, denoted $_{\Lambda}\widetilde{\mathcal S}_{\alpha,\beta}^{a,b}[Y],$ or $_{\Lambda}\widetilde{\mathcal S}_{\alpha,\beta}^{a,b}[Y](x)$ when written as a function of the variable $x$, as

\begin{equation}\label{eq:invGSTs1}
y(x)=\,
_{\Lambda}\widetilde{\mathcal S}_{\alpha,\beta}^{a,b}[Y](x)
\,\equiv\,
\mathcal G(s),
\end{equation}

where, defining $\Omega$ via
\begin{equation}\label{eq:invGSTs3a}
\Omega+1=\frac{\beta(a+1)+\alpha(1-b)}{\beta},
\end{equation}
the function $\mathcal G(s)$ is defined as 

\begin{equation}\label{eq:invGSTs3}
\mathcal G(s)=\left(\Lambda(\Omega+1)\int Y(s)^{\frac{1-b}\beta }ds+K_1\right)^{\frac{1}{\Omega+1}},
\end{equation}

for $\Omega\neq-1$, or

\begin{equation}\label{eq:invGSTs4}
\mathcal G(s)=K_3\,\exp\left(\Lambda\int Y(s)^{\frac{1-b}\beta}ds\right),
\end{equation}

for $\Omega=-1$, and where

\begin{equation}\label{eq:invGSTs2}
x(s)=  K_2+\int \frac{\Lambda}{\mathcal G(s)^{a-\frac {\alpha b}\beta}Y(s)^{\frac b\beta}}\, ds,
\end{equation}

with $K_1$, $K_2$ and $K_3$ in the above formulae being arbitrary constants.

\end{defi}

\begin{remark}
		In this paper, our aim is to use the GST and its inverse in order to seek solutions $y(x)$ of differential equations by transforming first of all to new differential equations in $Y(s)$, and so the validity of the above transformations can only be checked a posteriori. This is not only because $y(x)$ and $Y(s)$, each assumed to be defined in some domain of $\mathbb R$ or $\mathbb C$, are not known at the outset, but also because their properties, and indeed their domains of definition, may depend on constants of integration as well as on any parameters in, and other aspects of, the differential equations they satisfy. For further comments on the use of GST as a means of solving differential equations, we refer to Remark~\ref{eqsols} below.
\end{remark}

\begin{remark}\label{kscal}
		In the above definition of the GST, the parameter $\Lambda$ incorporates a possible rescaling and/or reflection of the new independent variable $s$, which can be useful in order to avoid having to redefine $s$ after deriving the new differential equation in $Y(s)$.
				
\end{remark}


\begin{prop}\label{prop:inverse}
Let $\alpha,\beta,a,b,\Lambda\in\mathbb R$, $\beta\neq 0$  and $\Lambda\neq0$. Then, from Definitions~\ref{def:GSTs} and \ref{def:invGSTs}, it follows that

\begin{equation}
Y(s)=   \,_{\Lambda}\mathcal{S}_{\alpha,\beta}^{a,b}[y](s) \Longleftrightarrow y(x)= \,_{\Lambda}\widetilde{\mathcal S}_{\alpha,\beta}^{a,b}[Y](x)
\end{equation}
\end{prop}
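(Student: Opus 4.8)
The plan is to verify the equivalence by direct computation in both directions, treating the GST and IGST as explicit change-of-variable formulas and checking that applying one and then the other returns the identity. The key technical fact underlying everything is that, given $Y(s) = y^\alpha(x) y'^\beta(x)$ with $s'(x) = \Lambda^{-1} y^a(x) y'^b(x)$, one can solve algebraically for $y'$ and $y$ as functions of $Y$ and $s$ along the trajectory. First I would start from the two defining relations of the GST and eliminate $y'$: raising the first to a suitable power and combining with the second, one obtains $y'$ as a monomial in $y$, $Y$, and $s'$; substituting back gives $y' = \mathcal{G}$-type expressions. Concretely, from $Y = y^\alpha y'^\beta$ we get $y' = (Y y^{-\alpha})^{1/\beta}$, and then $s' = \Lambda^{-1} y^a (Y y^{-\alpha})^{b/\beta} = \Lambda^{-1} y^{a - \alpha b/\beta} Y^{b/\beta}$, which is exactly the relation inverted in \eqref{eq:invGSTs2} once we identify $\mathcal{G}(s) = y(x(s))$.

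Next I would derive the differential equation satisfied by $\mathcal G(s) := y(x(s))$ as a function of $s$. By the chain rule, $\frac{d\mathcal G}{ds} = y'(x)\, \frac{dx}{ds} = y'(x)/s'(x) = \Lambda\, y'^{1-b} y^{-a}$. Now substitute $y' = (Y y^{-\alpha})^{1/\beta} = (Y \mathcal G^{-\alpha})^{1/\beta}$ to express the right-hand side purely in terms of $\mathcal G$, $Y$: one gets $\frac{d\mathcal G}{ds} = \Lambda\, Y^{(1-b)/\beta}\, \mathcal G^{-\alpha(1-b)/\beta - a}$. The exponent of $\mathcal G$ here is $-\alpha(1-b)/\beta - a = -(\Omega+1)+1 = -\Omega$ by the definition \eqref{eq:invGSTs3a} of $\Omega$, so the ODE reads $\mathcal G^{\Omega}\, d\mathcal G = \Lambda\, Y^{(1-b)/\beta}\, ds$. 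Integrating this separable equation yields precisely \eqref{eq:invGSTs3} when $\Omega \neq -1$ (after multiplying by $\Omega+1$) and \eqref{eq:invGSTs4} when $\Omega = -1$ (the logarithmic case). This shows the forward implication: the GST data force $y(x(s))$ to be given by the IGST formula, with $x(s)$ recovered by integrating $dx/ds = 1/s'(x) = \Lambda\, y^{-a} y'^{-b} = \Lambda\, \mathcal G^{-a+\alpha b/\beta} Y^{-b/\beta}$, matching \eqref{eq:invGSTs2}.

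For the converse, I would run the same computation backwards: assume $y(x) = {}_{\Lambda}\widetilde{\mathcal S}_{\alpha,\beta}^{a,b}[Y](x)$, i.e. $y = \mathcal G(s)$ with $\mathcal G$ and $x(s)$ given by \eqref{eq:invGSTs3}–\eqref{eq:invGSTs2}, and verify that $y^\alpha y'^\beta = Y(s)$ and $s'(x) = \Lambda^{-1} y^a y'^b$. Differentiating \eqref{eq:invGSTs2} gives $dx/ds = \Lambda\, \mathcal G^{-a+\alpha b/\beta} Y^{-b/\beta}$, hence $s'(x) = \Lambda^{-1} \mathcal G^{a - \alpha b/\beta} Y^{b/\beta}$; meanwhile $y'(x) = \frac{d\mathcal G}{ds}\cdot\frac{ds}{dx} = (\Lambda Y^{(1-b)/\beta}\mathcal G^{-\Omega})(\Lambda^{-1}\mathcal G^{a-\alpha b/\beta}Y^{b/\beta}) = Y^{1/\beta}\mathcal G^{-\alpha/\beta}$, using \eqref{eq:invGSTs3a} once more to combine the $\mathcal G$-exponents. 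Then $y^\alpha y'^\beta = \mathcal G^\alpha \cdot Y \mathcal G^{-\alpha} = Y$, and $y^a y'^b = \mathcal G^a Y^{b/\beta}\mathcal G^{-\alpha b/\beta} = \Lambda\, s'(x)$, as required.

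The main obstacle I anticipate is purely bookkeeping: keeping the fractional exponents straight and invoking the definition \eqref{eq:invGSTs3a} of $\Omega$ at exactly the right moments so that the exponents of $\mathcal G$ collapse cleanly. There is also a mild subtlety, already flagged in the remarks, about the arbitrary constants $K_1, K_2, K_3$ (and the constant $C$ in Remark~\ref{rem_GST_2}): the equivalence holds modulo these integration constants, i.e. modulo a shift in $s$ and the freedom in reconstructing $y$ from its derivative, and I would state this explicitly rather than claim a literal pointwise identity. No genuine analytic difficulty arises beyond assuming the relevant functions are differentiable and nonvanishing where the powers are taken, which is the standing hypothesis for formal use of these transformations.
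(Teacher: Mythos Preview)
Your proposal is correct and follows essentially the same route as the paper's proof: in both directions you eliminate $y'$ via $y'=(Yy^{-\alpha})^{1/\beta}$, obtain the separable relation $\mathcal G^{\Omega}\,d\mathcal G=\Lambda Y^{(1-b)/\beta}\,ds$ (the paper reaches the equivalent identity by integrating $\Lambda^{-1}y^{\Omega}y'\,dx$), and then verify the converse by differentiating the IGST formulas and checking $y^\alpha y'^\beta=Y$ and $s'=\Lambda^{-1}y^ay'^b$. Your remark that the equivalence holds modulo the integration constants $K_1,K_2,K_3$ and the shift in $s$ matches the paper's caveat in Remark~\ref{rem_GST_2}.
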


\begin{proof}
	
We give a proof in each direction separately.
	
\subsubsection*{${\rm Part\ 1: \eqref{eq:GSTs1},\ \eqref{eq:GSTs2}} \Rightarrow {\rm \eqref{eq:invGSTs1},\ \eqref{eq:invGSTs3a},\ \eqref{eq:invGSTs3},\ \eqref{eq:invGSTs4},\  \eqref{eq:invGSTs2}}$}
We start with two functions $y(x)$ and $Y(s)$ such that $Y(s)=\,_{\Lambda}\GSTs[y](s)$, where $\,_{\Lambda}\GSTs$ is as given in Definition~\ref{def:GSTs}. 	
First, writing Eq.~\eqref{eq:GSTs2} as $s'(x)=\Lambda^{-1}y^ay'^{b-1}y',$ and then making use of Eq.~\eqref{eq:GSTs1}, it follows that

\begin{equation}
s'(x)
=\Lambda^{-1}y(x)^a\left(\frac{Y(s)}{y(x)^\alpha}\right)^{\frac{b-1}\beta} y'(x)
=\Lambda^{-1}y(x)^{\Omega}Y(s)^{\frac{b-1}\beta} y'(x),
\end{equation}
with $\Omega=a+\alpha\frac{1-b}\beta$. Then, by integrating, one finds

\begin{equation}
\int Y(s)^{\frac{1-b}\beta }ds
=\int \Lambda^{-1} [y(x)]^\Omega y'(x)dx
=\frac{[y(x)]^{\Omega+1}}{\Lambda(\Omega+1)}+K,\qquad \Omega \neq -1,
\end{equation}
for some arbitrary constant $K$.
Thus, defining
\begin{equation}
\mathcal F(s)=\int Y(s)^{\frac{1-b}\beta }ds,
\end{equation}
it follows that, for $\Omega\neq-1$,

\begin{equation} \label{eq:G(s)}
y(x)=\left(\Lambda(\Omega+1)\mathcal F(s)+K_1\right)^{\frac1{\Omega+1}},\qquad 
\end{equation}
where $K_1=-\Lambda(\Omega+1)K$. Meanwhile, in the case $\Omega=-1,$ one simply has

\begin{equation}\label{eq:G(s)b}
y(x)= K_3 e^{\Lambda\mathcal F(s)}
\end{equation}
for some arbitrary constant $K_3$.
For simplicity, we will take the following notation:

\begin{equation}\label{eq:G(s)c}
\mathcal G(s)=\left(\Lambda(\Omega+1)\mathcal F(s)+K_1\right)^{\frac1{\Omega+1}},\quad \Omega \neq -1,
\end{equation}
and 
\begin{equation}\label{eq:G(s)d}
\mathcal G(s)= K_3 e^{\Lambda\mathcal F(s)},\quad \Omega = -1.
\end{equation}
Summarizing, from Eqs.~\eqref{eq:G(s)},~\eqref{eq:G(s)b}, ~\eqref{eq:G(s)c} and~\eqref{eq:G(s)d}, one obtains that necessarily

\begin{equation}\label{eq:y=G}
y(x)=\mathcal G(s),
\end{equation}
as required.
With respect to the change of independent variable, from Eqs.~\eqref{eq:GSTs1} and~\eqref{eq:GSTs2} it directly follows that
	\begin{equation}
		s'(x)
		=\Lambda^{-1}y(x)^a\left(\frac{Y(s)}{y(x)^\alpha}\right)^{\frac b\beta}
		=\Lambda^{-1}y(x)^{a-\frac{\alpha b}{\beta}}Y(s)^{\frac b\beta}.
	\end{equation}
	Finally, by substituting Eq.~\eqref{eq:y=G} in the latter equation one finds
	\begin{equation}
		s'(x)=\Lambda^{-1}\mathcal G(s)^{a-\frac {\alpha b}\beta}Y(s)^{\frac b\beta},
	\end{equation}
	from where 
	\begin{equation}
x(s)=\int \frac{\Lambda}{\mathcal G(s)^{a-\frac {\alpha b}\beta}Y(s)^{\frac b\beta}}\, ds + K_2,
	\end{equation}
with $K_2$ being an arbitrary constant. This is just Eq.~\eqref{eq:invGSTs2}.

\subsubsection*{${\rm Part\ 2: \eqref{eq:GSTs1},\ \eqref{eq:GSTs2}} \Leftarrow {\rm \eqref{eq:invGSTs1},\ \eqref{eq:invGSTs3a},\ \eqref{eq:invGSTs3},\ \eqref{eq:invGSTs4},\  \eqref{eq:invGSTs2}}$}
Let us  now consider two functions  $Y(s)$ and $y(x)$ such that $y(x)=\,_{\Lambda}\widetilde{\mathcal S}_{\alpha,\beta}^{a,b}[Y](x),$ where $\,_{\Lambda}\GSTIs$ is as given in Definition~\ref{def:invGSTs}.

For $\Omega\neq-1$, Eqs.~\eqref{eq:invGSTs1}, ~\eqref{eq:invGSTs3a}, ~\eqref{eq:invGSTs3} and ~\eqref{eq:invGSTs2} yield
\begin{eqnarray}
y'(x) & = & \frac{1}{\Omega+1}\left(\Lambda(\Omega+1)\int Y(s)^{\frac{1-b}\beta }ds+K_1\right)^{-\frac{\Omega}{\Omega+1}}\Lambda(\Omega+1)Y(s)^{\frac{1-b}\beta }s'(x) \nonumber \\
& = & G(s)^{-\Omega}\Lambda Y(s)^{\frac{1-b}\beta }
\frac{1}{\Lambda}\mathcal G(s)^{a-\frac {\alpha b}\beta}Y(s)^{\frac b\beta}
\nonumber \\
& = & G(s)^{-a-\frac {\alpha (1-b)}\beta}G(s)^{a-\frac {\alpha b}\beta}Y(s)^{\frac{1}\beta }\nonumber \\
& = & G(s)^{-\frac {\alpha}\beta}Y(s)^{\frac{1}\beta }.
\end{eqnarray}
Meanwhile, for $\Omega=-1$, Eqs.~\eqref{eq:invGSTs1}, ~\eqref{eq:invGSTs3a}, ~\eqref{eq:invGSTs4} and ~\eqref{eq:invGSTs2} yield
\begin{eqnarray}
y'(x) & = & {K_3}\,\exp\left(\Lambda\int Y(s)^{\frac{1-b}\beta}ds\right) \Lambda Y(s)^{\frac{1-b}\beta}s'(x) \nonumber \\
& = & G(s) \Lambda Y(s)^{\frac{1-b}\beta}
\frac{1}{\Lambda}\mathcal G(s)^{a-\frac {\alpha b}\beta}Y(s)^{\frac b\beta} \nonumber \\
& = & G(s)^{1+a-\frac {\alpha b}\beta}Y(s)^{\frac{1}\beta } \nonumber \\
& = & G(s)^{-\frac {\alpha}\beta}Y(s)^{\frac{1}\beta }.
\end{eqnarray}
Thus, since $y(x)=G(s)$, we obtain, for any $\Omega$,
\begin{equation}\label{eq:G(s)e}
Y(s)=G(s)^\alpha y'^{\beta}(x)=y^{\alpha}(x) y'^{\beta}(x),
\end{equation}
which is just Eq.~\eqref{eq:GSTs1}. In order to derive Eq.~\eqref{eq:GSTs2},
note that Eq.~\eqref{eq:invGSTs2} gives
\begin{equation}
s'(x)=\frac{1}{\Lambda}\mathcal G(s)^{a-\frac {\alpha b}\beta}Y(s)^{\frac b\beta},
\end{equation}
and so, setting $G(s)=y(x)$ and making use of Eq.~\eqref{eq:G(s)e}, we obtain
\begin{equation}
s'(x)=\frac{1}{\Lambda} y(x)^{a-\frac {\alpha b}\beta}y(x)^{\frac {\alpha b}\beta}y'(x)^b
=\Lambda^{-1}y^a(x)y'^b(x),
\end{equation}
as required. This then completes the proof.

\end{proof}

\begin{remark}
We see from the above proposition that $\,_{\Lambda}\GSTs$ and $\,_{\Lambda}\GSTIs$ are indeed inverses. This can also be seen explicitly from the following Corollary, where we show that the composition $\,_{\Lambda}\GSTs\,_{\Lambda}\GSTIs$, which provides a mapping
$Y(s)\longrightarrow y(x)\longrightarrow \widetilde Y(\tilde s)$, is just the identity transformation (modulo translations in the independent variable).
\end{remark}

	\begin{coro}\label{coro:inverse}
		Let $\alpha,\beta,a,b,\Lambda\in\mathbb R$, $\beta\neq 0$ and $\Lambda\neq0$. Then the composition $\,_{\Lambda}\GSTs\,_{\Lambda}\GSTIs$, where $\,_{\Lambda}\GSTs$ and $\,_{\Lambda}\GSTIs$ are as defined in Definitions~\ref{def:GSTs} and \ref{def:invGSTs}, respectively, is the identity transformation (up to a translation).
	\end{coro}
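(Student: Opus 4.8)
The plan is to prove Corollary~\ref{coro:inverse} by a direct application of Proposition~\ref{prop:inverse}, avoiding any further computation. Proposition~\ref{prop:inverse} establishes the logical equivalence $Y(s) = {}_{\Lambda}\mathcal S_{\alpha,\beta}^{a,b}[y](s) \Longleftrightarrow y(x) = {}_{\Lambda}\widetilde{\mathcal S}_{\alpha,\beta}^{a,b}[Y](x)$, and the corollary is essentially the statement that, read as operators on (equivalence classes of) functions, these two facts say that $_{\Lambda}\mathcal S_{\alpha,\beta}^{a,b}$ and $_{\Lambda}\widetilde{\mathcal S}_{\alpha,\beta}^{a,b}$ are mutually inverse up to the translation ambiguity noted in Remark~\ref{rem_GST_2}.

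First I would fix a function $Y = Y(s)$ and set $y(x) = {}_{\Lambda}\widetilde{\mathcal S}_{\alpha,\beta}^{a,b}[Y](x)$, which by Definition~\ref{def:invGSTs} means $y(x) = \mathcal G(s)$ together with the change of variable \eqref{eq:invGSTs2}; here $\mathcal G$ is built from an integral of $Y(s)^{(1-b)/\beta}$ and therefore carries its own constant of integration, so $y$ is determined only up to the translation in $x$ coming from $K_2$ (and the solution branch choices in $\mathcal G$). Then I would apply the ``$\Leftarrow$'' direction of Proposition~\ref{prop:inverse} (Part 2 of its proof) to conclude that this same $y$ satisfies $Y(s) = {}_{\Lambda}\mathcal S_{\alpha,\beta}^{a,b}[y](s)$, i.e.\ $Y = y^{\alpha}(x)\,y'^{\beta}(x)$ with $s'(x) = \Lambda^{-1} y^{a}(x) y'^{b}(x)$. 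Now I would form $\widetilde Y(\tilde s) = {}_{\Lambda}\mathcal S_{\alpha,\beta}^{a,b}[y](\tilde s)$: by its definition this equals $y^{\alpha}(x) y'^{\beta}(x)$ with $\tilde s$ defined by $\tilde s'(x) = \Lambda^{-1} y^{a}(x) y'^{b}(x)$. Comparing, $\widetilde Y$ and $Y$ are given by the identical expression in $x$, and $\tilde s$ and $s$ satisfy the identical ODE in $x$, hence differ only by an additive constant (the constant of integration in $\tilde s$ versus that implicit in the original $s$). Therefore $\widetilde Y(\tilde s) = Y(\tilde s - c)$ for some constant $c$, which is precisely the assertion that the composition ${}_{\Lambda}\mathcal S_{\alpha,\beta}^{a,b}\,{}_{\Lambda}\widetilde{\mathcal S}_{\alpha,\beta}^{a,b}$ is the identity up to a translation.

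The bookkeeping of which constants of integration are free, and the precise sense in which ``$y$'' recovered from $Y$ is the same object fed back into $_{\Lambda}\mathcal S_{\alpha,\beta}^{a,b}$, is the only delicate point; I expect the main obstacle to be stating cleanly that the translation ambiguity in Remark~\ref{rem_GST_2} is \emph{exactly} what accounts for the discrepancy, rather than some genuine loss of information. One should note that $\mathcal G(s)$ involves, besides $K_1$ (or $K_3$), a choice of root $(\cdot)^{1/(\Omega+1)}$ for $\Omega\neq -1$; as in the rest of the paper these branch choices are to be resolved a posteriori when solving a concrete differential equation, and under a fixed consistent choice the argument above goes through verbatim. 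I would close by remarking that the reverse composition ${}_{\Lambda}\widetilde{\mathcal S}_{\alpha,\beta}^{a,b}\,{}_{\Lambda}\mathcal S_{\alpha,\beta}^{a,b}$, mapping $y(x)\to Y(s)\to \widetilde y(\tilde x)$, is likewise the identity up to a translation in the independent variable, by the symmetric use of the ``$\Rightarrow$'' direction (Part 1) of Proposition~\ref{prop:inverse}; this is the content already anticipated in the Remark preceding the Corollary.
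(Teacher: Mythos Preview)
Your proposal is correct and follows essentially the same argument as the paper: start from $Y$, apply the IGST to get $y$, invoke the ``$\Leftarrow$'' direction of Proposition~\ref{prop:inverse} to conclude that $Y$, $s$, $y$, $x$ also satisfy the GST relations, and then compare with the freshly-formed $\widetilde Y(\tilde s)={}_{\Lambda}\mathcal S_{\alpha,\beta}^{a,b}[y](\tilde s)$ to see that $\widetilde Y$ and $Y$ agree up to a shift in the independent variable. Your additional comments on branch choices and on the reverse composition are reasonable elaborations but not needed for the corollary as stated.
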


	\begin{proof}
		
Let us consider the composition 
$\,_{\Lambda}\GSTs\,_{\Lambda}\GSTIs$, which provides a mapping $Y(s)\longrightarrow y(x)\longrightarrow \widetilde Y(\tilde s)$. By
definition, $Y$, $s$, $y$ and $x$ are related as in Eqs.~\eqref{eq:invGSTs1}, ~\eqref{eq:invGSTs3a}, ~\eqref{eq:invGSTs3}, ~\eqref{eq:invGSTs4} and ~\eqref{eq:invGSTs2}, and $y$, $x$, $\widetilde Y$ and $\tilde s$ satisfy
\begin{equation}\label{eq:GSTs1tila}
\widetilde Y(\tilde s) = y^\alpha(x)y'^\beta (x),\qquad 
\tilde s'(x)=\Lambda^{-1}y^a(x)y'^b(x).
\end{equation}
But, according to Proposition~\ref{prop:inverse}, a consequence of Eqs.~\eqref{eq:invGSTs1}, ~\eqref{eq:invGSTs3a}, ~\eqref{eq:invGSTs3}, ~\eqref{eq:invGSTs4} and ~\eqref{eq:invGSTs2} is that $Y$, $s$, $y$ and $x$
satisfy Eqs.~\eqref{eq:GSTs1} and ~\eqref{eq:GSTs2}, i.e.,
\begin{equation}\label{eq:GSTs1tilb}
Y(s) = y^\alpha(x)y'^\beta (x),\qquad s'(x)=\Lambda^{-1}y^a(x)y'^b(x).
\end{equation}
From ~\eqref{eq:GSTs1tila} and ~\eqref{eq:GSTs1tilb} we see that $\widetilde Y(\tilde s)=Y(s)$ and $s'(x)=\tilde s'(x)$, and so $\widetilde Y(\tilde s)=Y(\tilde s+C)$ for some
arbitray constant $C$. Since, as noted in Remark~\ref{rem_GST_2}, the GST $Y(s)$ of
a function $y$ is defined only up to shifts in its argument, this then implies that the composition $\,_{\Lambda}\GSTs\,_{\Lambda}\GSTIs$ corresponds to the identity mapping.
		
\end{proof}

\begin{remark}\label{eqsols}
		Our over-riding aim in this paper is to demonstrate the usefulness of the GST as a means of solving differential equations. Let us consider first of all the case of an autonomous second order differential equation in $y(x)$. It turns out that there are examples of such equations that the GST allows us to reduce to autonomous first order differential equations in the transformed variables $Y(s)$; this is in contrast to the use of Lie symmetries which, as is well-known, allows a reduction in the order of autnomous differential equations by one, but at the expense of transforming to nonautonomous equations. Then, in the case of a first order autonomous equation in the transformed variables, its solution $Y(s)=Y(s+K_0)$, where $K_0$ is an arbitrary constant, is to be used in Definition~\ref{def:invGSTs} in order to obtain the corresponding solution $y(x)$ of the second order equation under consideration. Here $K_0$ is redundant, as the combination $s+K_0$ is maintained throughout the process of determining $y(x)$; thus, in accordance with Remark~\ref{rem_GST_2}, we may set $K_0=0.$ We thus obtain from Definition~\ref{def:invGSTs} either a solution of the form $y(x)=f(x-K_2;K_1)$ in the case $\Omega\neq-1$, or of the form $y(x)=g(x-K_2;K_3)$ in the case $\Omega=-1$; here $K_1$ and $K_2$, or $K_3$ and $K_2$, are, respectively, the two required constants of integration.  
\end{remark}

\begin{remark}
Although we do not consider such problems in the present paper, the process for higher order autonomous equations is similar. For an autonomous equation of order $n\geq 3$ such that the GST may be used to obtain an autonomous equation in the transformed variables of order $n-1$, we may then use a solution $Y(s)=Y(s+K_0;L_1,\ldots,L_{n-2})$ of the latter depending on $n-1$ arbitrary constants, where once again we may set the redundant $K_0=0$, in order to obtain, using Definition~\ref{def:invGSTs}, a solution of the former of the form $y(x)=f(x-K_2;K_1,L_1,\ldots,L_{n-2})$ in the case $\Omega\neq-1$, or of the form $y(x)=g(x-K_2;K_3,L_1,\ldots,L_{n-2})$ in the case $\Omega=-1$. Here $K_2,L_1,\ldots,L_{n-2}$, along with either $K_1$ or $K_3$, are the required $n$ arbitrary constants of integration. Again we note the contrast with the use of Lie symmetries, whereby a transformation to an equation of order $n-1$, but nonautonomous, can be made. This last technique may, however, prove useful in solving autonomous equations in the transformed variables when the GST can be applied to autonomous differential equations of order $n\geq3$.
\end{remark}


\subsection{The subfamily of transformations with $\beta=0$}
In this section we analyze the subfamily of transformations $\beta=0$ and their relation with the non-linearity parameter of the travelling wave solutions of the  NRT nonlinear Schr\"odinger equation.  Letting $\beta=0$ in Eq.~\eqref{eq:GSTs1} gives
\begin{equation}\label{GST}
Y(s)=\, _{\Lambda}\mathcal S_{\alpha}^{a,b}[y](s) \equiv\, y^{\alpha}(x) ,\quad ds=\Lambda^{-1}\,y^{a}(x)y'^b(x)\,dx,
\end{equation}
where we adopt the notation $_{\Lambda}\mathcal S_{\alpha}^{a,b}=\ _\Lambda\mathcal S_{\alpha,0}^{a,b}$. We now show that this subfamily of transformations has the structure of a group provided that $\alpha\neq0$. We begin by showing that it is closed under composition.

\begin{lemma}[Composition law]\label{complaw}
The subfamily of transformations $\beta=0$ obeys the following composition law:
\begin{equation}\label{eq:comp_beta0}
_{\Delta}\mathcal S_{\delta}^{A,B}\circ\, _{\Lambda}\mathcal S_{\alpha}^{a,b}=\, _{j}\mathcal S_{\gamma}^{c,d},\quad  dz = j^{-1} y^c(x) (y')^d(x)\, dx
\end{equation}
with 
\begin{equation}\label{comp_par}
 \gamma = \alpha \delta,\quad c=a+A\alpha+ B(\alpha-a-1),\quad d=b+B(1-b), \quad \frac{1}{j} = \frac{\alpha^B}{\Lambda^{1-B}\Delta}\, .
\end{equation}
\end{lemma}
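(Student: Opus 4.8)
The plan is to verify the composition law by direct computation, tracking both the transformed function and the change of independent variable through the two successive transformations. Write $Y={}_{\Lambda}\mathcal S_{\alpha}^{a,b}[y]$, so that $Y(s)=y^{\alpha}(x)$ with $s'(x)=\Lambda^{-1}y^{a}(x)y'^{b}(x)$, and then $Z={}_{\Delta}\mathcal S_{\delta}^{A,B}[Y]$, so that $Z(z)=Y^{\delta}(s)$ with $z'(s)=\Delta^{-1}Y^{A}(s)\,(dY/ds)^{B}$. The goal is to show that the composite map sends $y$ to the function equal to $y^{\gamma}(x)$, expressed in an independent variable $z$ satisfying $dz=j^{-1}y^{c}(x)y'^{d}(x)\,dx$, with $\gamma,c,d,j$ as in Eq.~\eqref{comp_par}; by Definition~\ref{def:GSTs} this is exactly $ {}_{j}\mathcal S_{\gamma}^{c,d}[y]$.

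The function part is immediate: $Z=Y^{\delta}=(y^{\alpha})^{\delta}=y^{\alpha\delta}$, which gives $\gamma=\alpha\delta$. For the change of variable, the first step is to express $dY/ds$ in terms of $x$ via the chain rule and the inversion $dx/ds=\Lambda\,y^{-a}y'^{-b}$ coming from Eq.~\eqref{GST}: since $dY/dx=\alpha y^{\alpha-1}y'$, one obtains $dY/ds=\alpha\Lambda\,y^{\alpha-a-1}y'^{\,1-b}$. Substituting this, together with $Y^{A}=y^{\alpha A}$ and $s'(x)=\Lambda^{-1}y^{a}y'^{b}$, into $z'(x)=z'(s)\,s'(x)=\Delta^{-1}Y^{A}(dY/ds)^{B}\,s'(x)$ and collecting the powers of $y$ and $y'$ yields
\[
z'(x)=\frac{\alpha^{B}}{\Lambda^{1-B}\Delta}\;y^{\,\alpha A+B(\alpha-a-1)+a}\;y'^{\,B(1-b)+b},
\]
which is precisely $j^{-1}y^{c}y'^{d}$ with $c=a+A\alpha+B(\alpha-a-1)$, $d=b+B(1-b)$ and $1/j=\alpha^{B}/(\Lambda^{1-B}\Delta)$, i.e.\ Eq.~\eqref{comp_par}.

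Combining the two observations establishes Eq.~\eqref{eq:comp_beta0}. The only points that require care are the chain-rule inversion that produces $dY/ds$ in terms of $x$ and the routine bookkeeping of exponents; I would also remark that, since by Remark~\ref{rem_GST_2} the transform of a function is defined only up to a shift of its argument, the intermediate constant of integration in $s$ is immaterial, so the identity holds as stated, modulo a translation of the independent variable. No genuine obstacle is anticipated beyond this careful accounting, which is why the statement is phrased as a lemma rather than a proposition.
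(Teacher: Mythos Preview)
Your proof is correct and follows essentially the same approach as the paper: both compute the function part $Z=y^{\alpha\delta}$ directly, then use the chain rule to express $dY/ds$ in terms of $x$ and substitute into $z'(x)=z'(s)\,s'(x)$ to read off the exponents $c,d$ and the constant $1/j$. The only cosmetic difference is that you compute $dY/ds$ first and then $z'(x)$, whereas the paper starts from $\dot z(s)=z'(x)\,dx/ds$ and solves for $z'(x)$; the algebra and the resulting formulae are identical.
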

\begin{proof}
On the one hand, we have that
 \begin{equation*}
 	\left( \,_{\Delta}\mathcal S_{\delta}^{A,B}\circ\, _{\Lambda}\mathcal S_{\alpha}^{a,b} \right)[y](z)  = \left( \,_{\Delta}\mathcal S_{\delta}^{A,B}\left( \, _{\Lambda}\mathcal S_{\alpha}^{a,b} \right)[y](s) \right)(z) = \,_{\Delta}\mathcal S_{\delta}^{A,B} [Y(s)] (z) =  \widetilde Y(z),
 \end{equation*}
 where $Y(s) = y^{\alpha}(x)$, $s'(x) = \frac{1}{\Lambda}y^{a}(x)(y')^b(x)$, $\widetilde Y(z) = Y^{\delta}(s) = y^{\alpha\delta}(x)$ and $\dot{z}(s) = \frac{1}{\Delta} Y^{A}(s) (\dot{Y})^B(s)$, and where we use the notation $\dot{z} =\frac{dz}{ds}$, $\dot{Y} =\frac{dY}{ds}$.
 Taking into account that
 \begin{equation*}
 	\dot{z}(s) = z'(x)\frac{dx}{ds} = z'(x)\left(\frac{ds}{dx}\right)^{-1} \quad \text{and}\quad \dot{Y}(s) = \frac{d Y(s(x))}{dx}\frac{dx}{ds} =   \frac{d y^{\alpha}(x)}{dx}\left(\frac{ds}{dx}\right)^{-1} =  \alpha\, y^{\alpha-1}(x) y'(x)\left(\frac{ds}{dx}\right)^{-1},
 \end{equation*}
 we have that,
 \begin{align*}
 	z'(x)\frac{dx}{ds} &= \frac{1}{\Delta}(y^{\alpha}(x))^{A}\left[\frac{d}{dx}(y^{\alpha}(x))\frac{dx}{ds} \right]^{B} \\[0.5em]
 	z'(x) & = \frac{\alpha^{B}}{\Delta}y^{\alpha A + B(\alpha -1)} (x) [y'(x)]^{B} \Lambda^{B-1} y^{a(1-B)}(x)(y')^{b(1-B)}(x) \\
 	& = \frac{\alpha^{B}}{\Lambda^{1-B}\Delta} y^{a+A\alpha + B(\alpha-a-1)}(x)(y')^{b+B(1-b)}(x)\, .
 \end{align*}
 On the other hand, 
 \begin{equation*}
\left(  \,_{\Delta}\mathcal S_{\delta}^{A,B}\circ\, _{\Lambda}\mathcal S_{\alpha}^{a,b} \right)[y](z)  =  \,_{j}\mathcal S_{\gamma}^{c,d}[y](z) = \widetilde Y(z),
 \end{equation*}
 where $\widetilde Y(z) = y^{\gamma}(x)$ and $z'(x) = \frac{1}{j} y^{c}(x)(y')^{d}(x)$.  We thus obtain the composition rule
 \[
 \gamma = \alpha\delta, \quad c = a+A\alpha+B(\alpha-a-1), \quad d = b + B(1-b) \quad \text{and} \quad \frac{1}{j} = \frac{\alpha^{B}}{\Lambda^{1-B}\Delta }\, ,
 \]
as stated.

\end{proof}

\begin{coro}[Identity and inverse]\label{compid}First of all we note that the identity transformation is $\mathbb I=\, _{1}\mathcal S_{1}^{0,0}[y]$, since
	\[
	\, _{1}\mathcal S_{1}^{0,0}[y](s) = y^{1}(x) = y(x), \quad 
	\text{with} \quad ds = \frac{1}{1}y^0(x)(y')^0(x)dx = dx \,\,\, {\rm \, and\ so}\,\,  s = x\,  \text{(up to a constant; see Remark~\ref{rem_GST_2}}).
	\]
	Then, using Lemma~\ref{lemma:beta0_EDOs}, the inverse of the transformation $\, _{\Lambda}\mathcal S_{\alpha}^{a,b}$, denoted by $\left(\, _{\Lambda}\mathcal S_{\alpha}^{a,b}\right)^{-1}$, is obtained by requiring $\left(\, _{\Lambda}\mathcal S_{\alpha}^{a,b}\right)^{-1}=\, _{\Delta}\mathcal S_{\delta}^{A,B}$ to satisfy
	\begin{equation}\label{eq:comp_inv}
	\left(\, _{\Lambda}\mathcal S_{\alpha}^{a,b}\right)^{-1} \circ\, _{\Lambda}\mathcal S_{\alpha}^{a,b} =	\, _{\Delta}\mathcal S_{\delta}^{A,B}\circ\, _{\Lambda}\mathcal S_{\alpha}^{a,b}=\, _{1}\mathcal S_{1}^{0,0}=\mathbb I,\quad  
	\end{equation}
and so, setting $\gamma=1$, $c=0$, $d=0$ and $j=1$ in Eq.~\eqref{comp_par}, we have
	\begin{equation}
	\delta=\alpha^{-1},\quad A=\frac{a+(1-\alpha)b}{\alpha (b-1)},\quad  B=\frac b{b-1},\quad \Delta=\alpha^{\frac b{b-1}} \Lambda^{\frac1{b-1}}.
	\end{equation}
\end{coro}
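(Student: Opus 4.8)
The plan is to verify the two assertions in Corollary~\ref{compid} directly from the composition law of Lemma~\ref{complaw}, treating the identity claim as an essentially immediate substitution and the inverse claim as the solution of a small linear system in the four parameters of $\,_{\Delta}\mathcal S_{\delta}^{A,B}$.

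First I would establish that $\mathbb I=\,_1\mathcal S_1^{0,0}$ is a genuine identity for the subfamily with $\beta=0$. Plugging $\alpha=1$, $a=b=0$, $\Lambda=1$ into Eq.~\eqref{GST} gives $Y(s)=y(x)$ with $ds=dx$, hence $s=x$ up to an additive constant, which by Remark~\ref{rem_GST_2} is immaterial; so $\,_1\mathcal S_1^{0,0}[y]=y$ as a function (modulo the harmless shift in argument). That settles the first sentence of the statement.

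Next I would impose the defining relation \eqref{eq:comp_inv}: I want $\,_{\Delta}\mathcal S_{\delta}^{A,B}$ such that $\,_{\Delta}\mathcal S_{\delta}^{A,B}\circ\,_{\Lambda}\mathcal S_{\alpha}^{a,b}=\mathbb I=\,_1\mathcal S_1^{0,0}$. By Lemma~\ref{complaw} the left-hand side equals $\,_j\mathcal S_\gamma^{c,d}$ with $\gamma,c,d,j$ given by Eq.~\eqref{comp_par}, so it suffices to set $\gamma=1$, $c=0$, $d=0$, $j=1$ and solve \eqref{comp_par} for $(\delta,A,B,\Lambda$-partner$)$. From $\gamma=\alpha\delta=1$ we get $\delta=\alpha^{-1}$ (which requires $\alpha\neq0$, already assumed since we are in the group case). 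From $d=b+B(1-b)=0$ we solve $B=\frac{b}{b-1}$ (note $b\neq1$ is implicitly required, else the $\beta=0$ transformation has a degenerate change of variable; this should be flagged). Substituting this $B$ and $\delta=\alpha^{-1}$ into $c=a+A\alpha+B(\alpha-a-1)=0$ gives a linear equation for $A$, whose solution I would simplify to $A=\dfrac{a+(1-\alpha)b}{\alpha(b-1)}$; this is the one genuinely computational step. Finally, from $\frac1j=\frac{\alpha^B}{\Lambda^{1-B}\Delta}=1$ I would solve $\Delta=\alpha^B\Lambda^{B-1}$ and substitute $B=\frac{b}{b-1}$, noting $B-1=\frac{1}{b-1}$, to obtain $\Delta=\alpha^{\frac{b}{b-1}}\Lambda^{\frac{1}{b-1}}$, matching the stated formula.

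The only mild subtlety — and the step I would be most careful about — is the algebraic simplification of $A$: one must expand $a+A\alpha+\frac{b}{b-1}(\alpha-a-1)=0$, collect the terms over the common denominator $b-1$, and confirm the numerator simplifies to exactly $a+(1-\alpha)b$; a sign slip here is the likeliest error. I would also remark, for completeness, that one should check the composition in the other order, $\,_{\Lambda}\mathcal S_{\alpha}^{a,b}\circ\,_{\Delta}\mathcal S_{\delta}^{A,B}=\mathbb I$, but since Lemma~\ref{complaw} shows the subfamily (with nonzero first index) is closed and associative under composition with identity $\mathbb I$, a left inverse in such a structure is automatically a two-sided inverse once one knows the resulting transformation again has nonzero first index $\gamma=\alpha\delta=1\neq0$; hence no separate verification is needed. (Strictly, the reference to Lemma~\ref{lemma:beta0_EDOs} in the statement appears to be a typo for Lemma~\ref{complaw}, the composition law, which is what the argument actually uses.)
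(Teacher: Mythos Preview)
Your proposal is correct and follows exactly the paper's approach: the corollary is obtained by substituting $\gamma=1$, $c=0$, $d=0$, $j=1$ into the composition law of Lemma~\ref{complaw} and solving for $\delta,A,B,\Delta$, with the identity verified by direct substitution. Your additional remarks on the implicit nondegeneracy conditions ($\alpha\neq0$, $b\neq1$), on the two-sided inverse, and on the apparent typo referencing Lemma~\ref{lemma:beta0_EDOs} instead of Lemma~\ref{complaw} are all apt and go slightly beyond what the paper spells out.
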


\begin{coro}[Associativity]\label{compass}

Defining $\, _{\overline k}\mathcal S_{\overline \alpha}^{\overline a,\overline b}$ and $\, _{\widetilde k}\mathcal S_{\widetilde \alpha}^{\widetilde a,\widetilde b}$ by
\begin{equation}
\, _{\overline k}\mathcal S_{\overline \alpha}^{\overline a,\overline b}=\, _{k_3}\mathcal S_{\alpha_3}^{a_3,b_3}\circ\left(
\, _{k_2}\mathcal S_{\alpha_2}^{a_2,b_2}\circ\, _{k_1}\mathcal S_{\alpha_1}^{a_1,b_1}\right)
\end{equation}
and
\begin{equation}
\, _{\widetilde k}\mathcal S_{\widetilde \alpha}^{\widetilde a,\widetilde b}=\left(\, _{k_3}\mathcal S_{\alpha_3}^{a_3,b_3}\circ
\, _{k_2}\mathcal S_{\alpha_2}^{a_2,b_2}\right)\circ\, _{k_1}\mathcal S_{\alpha_1}^{a_1,b_1},
\end{equation}
it follows from the composition law given in Lemma~\ref{complaw} that
\begin{eqnarray}
\overline \alpha = \widetilde \alpha & = & \alpha_3\alpha_2\alpha_1,\\
\overline a = \widetilde a & = & a_1+a_2\alpha_1+a_3\alpha_2\alpha_1+b_3\alpha_1(\alpha_2-a_2-1)+(b_3+b_2-b_3b_2)(\alpha_1-a_1-1), \\
\overline b = \widetilde b & = & b_1+b_2(1-b_1)+b_3(1-b_2)(1-b_1), \\
(\overline k)^{-1} = (\widetilde k)^{-1} & = & 
\frac{\alpha_2^{b_3}\alpha_1^{b_3+b_2-b_3b_2}}{k_3k_2^{1-b_3}k_1^{(1-b_3)(1-b_2)}}.
\end{eqnarray}
We thus see that the composition defined in Lemma~\ref{complaw} is associative.
\end{coro}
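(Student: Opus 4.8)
The plan is to apply the composition law of Lemma~\ref{complaw} twice, once for each bracketing, and then to compare the resulting four parameters; since every transformation in this subfamily is completely determined by its parameter quadruple $(\alpha,a,b,k)$ (the relation between the independent variables being fixed by $(c,d,j)$), matching the quadruples is enough. It is convenient to name the two possible intermediate composites: set ${}_{k_{23}}\mathcal S_{\alpha_{23}}^{a_{23},b_{23}} := {}_{k_3}\mathcal S_{\alpha_3}^{a_3,b_3}\circ{}_{k_2}\mathcal S_{\alpha_2}^{a_2,b_2}$ and ${}_{k_{12}}\mathcal S_{\alpha_{12}}^{a_{12},b_{12}} := {}_{k_2}\mathcal S_{\alpha_2}^{a_2,b_2}\circ{}_{k_1}\mathcal S_{\alpha_1}^{a_1,b_1}$, so that Lemma~\ref{complaw} gives explicit formulae for $\alpha_{23},a_{23},b_{23},k_{23}$ in terms of the indices $2,3$ and for $\alpha_{12},a_{12},b_{12},k_{12}$ in terms of $1,2$. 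Then $\widetilde\alpha,\widetilde a,\widetilde b,\widetilde k$ are the parameters of ${}_{k_{23}}\mathcal S_{\alpha_{23}}^{a_{23},b_{23}}\circ{}_{k_1}\mathcal S_{\alpha_1}^{a_1,b_1}$ while $\overline\alpha,\overline a,\overline b,\overline k$ are those of ${}_{k_3}\mathcal S_{\alpha_3}^{a_3,b_3}\circ{}_{k_{12}}\mathcal S_{\alpha_{12}}^{a_{12},b_{12}}$, and one more application of Lemma~\ref{complaw} to each composite produces all eight quantities; it remains to check they agree pairwise.

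The $\alpha$-component is immediate: the composition rule multiplies the $\alpha$'s and multiplication in $\mathbb R$ is associative, so both bracketings give $\alpha_1\alpha_2\alpha_3$. For the $b$-component it pays to record the elementary identity $1-\big(b_i+b_j(1-b_i)\big)=(1-b_i)(1-b_j)$, i.e.\ that the quantity $1-b$ is \emph{multiplied} under composition; this yields at once $\overline b=\widetilde b=b_1+b_2(1-b_1)+b_3(1-b_2)(1-b_1)$ and, more usefully for the next step, $1-\overline b=1-\widetilde b=(1-b_1)(1-b_2)(1-b_3)$.

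The two substantive verifications are $\overline a=\widetilde a$ and $\overline k=\widetilde k$. For the first, substitute the intermediate parameters into $c=a+A\alpha+B(\alpha-a-1)$ and expand. The left bracketing gives the stated form for $\widetilde a$ essentially on the nose; for the right bracketing it is cleanest to collect the result as an affine expression in the index composed last, checking that in both bracketings the part free of $b_3$ equals $a_1+a_2\alpha_1+a_3\alpha_1\alpha_2+b_2(\alpha_1-a_1-1)$ and that the coefficient of $b_3$, after using $\alpha_{12}=\alpha_1\alpha_2$ and the formula for $a_{12}$, reduces to $\alpha_1(\alpha_2-a_2-b_2)-(1-b_2)(a_1+1)$. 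For $\overline k=\widetilde k$, use $1/j=\alpha^{B}/(\Lambda^{1-B}\Delta)$ together with $1/k_{23}=\alpha_2^{b_3}/(k_2^{1-b_3}k_3)$, $1/k_{12}=\alpha_1^{b_2}/(k_1^{1-b_2}k_2)$ and the multiplicativity of $1-b$; carrying the exponents through, both bracketings give $(\overline k)^{-1}=(\widetilde k)^{-1}=\alpha_2^{b_3}\alpha_1^{b_2+b_3-b_2b_3}\big/\big(k_3\,k_2^{1-b_3}\,k_1^{(1-b_2)(1-b_3)}\big)$.

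The only genuine difficulty is bookkeeping: in each invocation of Lemma~\ref{complaw} one must keep straight which transformation plays the inner role ${}_{\Lambda}\mathcal S_{\alpha}^{a,b}$ and which the outer role ${}_{\Delta}\mathcal S_{\delta}^{A,B}$, and then run the expansions — especially the one for $\overline a$ — without sign or exponent slips. Once both computations are in hand, the four displayed identities follow by comparison, and the associativity of the composition of Lemma~\ref{complaw} is established.
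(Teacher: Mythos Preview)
Your proposal is correct and follows exactly the approach the paper intends: apply Lemma~\ref{complaw} twice for each bracketing and compare the resulting parameter quadruples. The paper itself provides no proof beyond listing the four formulae as consequences of the composition law, so your write-up in fact supplies the details the paper omits; your observation that $1-b$ is multiplicative under composition is a tidy way to organise the $b$- and $k$-computations and is not made explicit in the paper.
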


From Lemma~\ref{complaw} and Corollaries~\ref{compid} and ~\ref{compass}, it follows that:

\begin{prop}
	The subfamily of transformations $\beta=0$ with $\alpha\neq0$ forms a group under composition.
\end{prop}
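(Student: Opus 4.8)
The plan is to assemble the group axioms directly from the results already proven for the $\beta=0$ subfamily, so the proof is essentially a bookkeeping argument. First I would recall that Lemma~\ref{complaw} shows closure: the composition of two transformations $\,_{\Delta}\mathcal S_{\delta}^{A,B}$ and $\,_{\Lambda}\mathcal S_{\alpha}^{a,b}$ is again a transformation of the same type, namely $\,_{j}\mathcal S_{\gamma}^{c,d}$ with parameters given in Eq.~\eqref{comp_par}. The one thing to check here is that we stay inside the declared subfamily: we need $\gamma=\alpha\delta\neq0$, which holds precisely because both $\alpha\neq0$ and $\delta\neq0$ by hypothesis; the remaining parameters $c,d\in\mathbb R$ and $j\neq0$ (since $j^{-1}=\alpha^{B}/(\Lambda^{1-B}\Delta)$ is a nonzero real number when $\alpha,\Lambda,\Delta\neq0$) are unconstrained, so closure is genuine.

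Next I would invoke Corollary~\ref{compid}: it exhibits the identity element $\mathbb I=\,_{1}\mathcal S_{1}^{0,0}$, which lies in the subfamily since its first parameter is $1\neq0$, and verifies $\mathbb I\circ\,_{\Lambda}\mathcal S_{\alpha}^{a,b}=\,_{\Lambda}\mathcal S_{\alpha}^{a,b}$ (and, by a symmetric use of the composition law, composition on the other side too). The same corollary constructs, for each $\,_{\Lambda}\mathcal S_{\alpha}^{a,b}$ with $\alpha\neq0$, an explicit inverse $\,_{\Delta}\mathcal S_{\delta}^{A,B}$ with $\delta=\alpha^{-1}$, $A,B$ as displayed, and $\Delta=\alpha^{b/(b-1)}\Lambda^{1/(b-1)}$; since $\delta=\alpha^{-1}\neq0$ this inverse also belongs to the subfamily. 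Then Corollary~\ref{compass} supplies associativity: composing three elements in either grouping yields the same quadruple of parameters $(\overline\alpha,\overline a,\overline b,\overline k)=(\widetilde\alpha,\widetilde a,\widetilde b,\widetilde k)$, so the two triple compositions coincide as transformations. Putting these four facts together—closure, associativity, identity, inverses—gives exactly the group axioms, and the proof is complete.

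The only subtlety worth flagging, and the point I would be most careful about, is the sense in which two of these transformations are "equal." As emphasized in Remark~\ref{rem_GST_2}, the transformed function is defined only modulo a shift in its independent variable, and the new variable $s$ is determined by an integration only up to an additive constant. So all the identities above—$\mathbb I$ acting as identity, the inverse composing to $\mathbb I$, associativity—hold up to such translations, and the group in question is really the group of these transformations modulo reparametrization by shifts of the independent variable. Since every statement we are citing (Corollary~\ref{coro:inverse}, Corollary~\ref{compid}) is already phrased in this "up to a translation" convention, no additional work is needed; it simply must be stated clearly that this is the equivalence under which the group structure is claimed. There is no real obstacle here: the heavy lifting—deriving the composition formula and checking associativity by direct computation—has already been carried out in Lemma~\ref{complaw} and Corollary~\ref{compass}, and the present proposition is the immediate synthesis.
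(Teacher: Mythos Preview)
Your proposal is correct and follows essentially the same approach as the paper: both simply assemble the group axioms by citing Lemma~\ref{complaw} for closure, Corollary~\ref{compid} for the identity and inverses, and Corollary~\ref{compass} for associativity. Your version is in fact more careful than the paper's one-line proof, since you explicitly verify that the parameter constraints ($\gamma=\alpha\delta\neq0$, $j\neq0$, $\delta=\alpha^{-1}\neq0$) keep the compositions and inverses inside the declared subfamily, and you flag the ``up to translation'' convention under which equality of transformations is to be understood.
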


\begin{remark}
In contrast to the case $\beta\neq0$, the subfamily of transformations $\beta=0$  can always be applied to a given differential equation in $y(x)$. Moreover, it preserves the order of the equation, as we now see in the following example.
\end{remark}

\begin{lemma}\label{lemma:beta0_EDOs}
	Let $y=y(x)$ be a solution of the equation
	\begin{equation}\label{eq:EDO}
	(y')^l=Ay^m+By^n.
	\end{equation}  Then, the function $Y=Y(s)$ defined by equation \eqref{GST} with $\Lambda=\alpha^{-1}$ is a solution of the equation 
	\begin{equation}\label{eq:EDO'}
	(\dot{Y})^L=AY^M+BY^N,
	\end{equation} where  
	\begin{equation}\label{eq:lL}
	L=\frac l{1-b},\quad M=\frac m\alpha+\frac{l\,(\alpha-a-1)}{\alpha\,(1-b)},\quad N=\frac n\alpha+\frac{l\,(\alpha-a-1)}{\alpha\,(1-b)}.
	\end{equation}
\end{lemma}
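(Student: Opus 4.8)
The plan is to compute the transformed equation directly, substituting the definition of the GST with $\beta=0$ into \eqref{eq:EDO} and changing variables from $x$ to $s$. First I would record the basic relations from \eqref{GST}: with $\Lambda=\alpha^{-1}$ we have $Y=y^\alpha$, hence $y=Y^{1/\alpha}$, and $ds = \alpha^{-1}y^a (y')^b\,dx$. Differentiating $Y=y^\alpha$ with respect to $s$ via the chain rule gives $\dot Y = \dfrac{dY}{dx}\dfrac{dx}{ds} = \alpha y^{\alpha-1}y' \cdot \dfrac{\alpha}{y^a(y')^b} = \alpha^2 y^{\alpha-1-a}(y')^{1-b}$. (One should note that with the specific choice $\Lambda=\alpha^{-1}$ the factor works out cleanly; it is this choice that makes the coefficients $A$ and $B$ in \eqref{eq:EDO'} the same as in \eqref{eq:EDO}.) Actually, re-examining: since we want $A,B$ unchanged, the exponents must absorb all the $\alpha$'s, so I expect $\dot Y^L$ to reproduce $(y')^l$ exactly after accounting for the $y$-powers; I would verify the bookkeeping carefully here.

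The key step is then to raise $\dot Y$ to an appropriate power $L$ so as to isolate $(y')^l$. From $\dot Y = (\text{const})\,y^{\alpha-1-a}(y')^{1-b}$ we get $(y')^{1-b}$ proportional to $\dot Y \, y^{a+1-\alpha}$, so $(y')^{l} = \bigl((y')^{1-b}\bigr)^{l/(1-b)}$ is proportional to $\dot Y^{\,l/(1-b)}\, y^{(a+1-\alpha)l/(1-b)}$. This already identifies $L = \dfrac{l}{1-b}$. Substituting into \eqref{eq:EDO} and dividing through, the left side becomes (a constant times) $\dot Y^{L} y^{(a+1-\alpha)l/(1-b)}$ and the right side is $A y^m + B y^n$; moving the $y$-power across and writing everything in terms of $Y$ using $y = Y^{1/\alpha}$ converts each $y^{\text{exponent}}$ into $Y^{\text{exponent}/\alpha}$. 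Collecting the exponent of $Y$ multiplying $A$ gives $M = \dfrac{m}{\alpha} + \dfrac{l(\alpha - a - 1)}{\alpha(1-b)}$, and similarly for $N$ with $m$ replaced by $n$; the constant prefactor from $\alpha$'s must cancel precisely, which is where the hypothesis $\Lambda = \alpha^{-1}$ is used.

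The main obstacle — really the only nontrivial part — is the careful tracking of the constant prefactors coming from the powers of $\alpha$ (from $\dot Y = \alpha^2 y^{\cdots}(y')^{\cdots}$ raised to the power $L$, together with the factor from the change of variables) and confirming that, with $\Lambda = \alpha^{-1}$, they combine to exactly $1$, so that the coefficients $A$ and $B$ are genuinely preserved rather than merely rescaled. I would handle this by keeping the constant symbolic, say $\dot Y = c_1 y^{\alpha-1-a}(y')^{1-b}$ with $c_1$ expressed in terms of $\alpha$ and $\Lambda$, carrying $c_1^{L}$ through the computation, and then imposing $c_1^{L} = 1$ at the end to read off the required value of $\Lambda$ — checking that it is indeed $\alpha^{-1}$, consistent with the statement. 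A minor secondary point is to note the genericity assumptions implicit in the exponents ($b \neq 1$, $\alpha \neq 0$) so that $L$, $M$, $N$ are well-defined; these are already built into the hypotheses of the surrounding discussion.
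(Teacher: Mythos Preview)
Your approach is essentially the same as the paper's: compute $\dot Y$ via the chain rule, substitute the ODE for $(y')^l$, and rewrite in terms of $Y=y^\alpha$. The paper does this slightly more directly, writing $\dot Y = y^{\alpha-a-1}(y')^{1-b} = y^{\alpha-a-1}(Ay^m+By^n)^{(1-b)/l}$ and then raising both sides to the power $l/(1-b)$.

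One small slip to fix: with $\Lambda=\alpha^{-1}$ the definition gives $s'(x)=\Lambda^{-1}y^a(y')^b=\alpha\,y^a(y')^b$, so $ds=\alpha\,y^a(y')^b\,dx$ (not $\alpha^{-1}$), and hence $\dot Y=\alpha y^{\alpha-1}y'\cdot\dfrac{1}{\alpha y^a(y')^b}=y^{\alpha-1-a}(y')^{1-b}$ with no prefactor. Your symbolic $c_1$ check would have revealed this, but it's worth noting that $c_1=1$ from the outset; that is precisely why the choice $\Lambda=\alpha^{-1}$ preserves $A$ and $B$ exactly.
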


\begin{proof}
Let $y$ be a solution of the equation $(y')^l=Ay^m+By^n.$ Then, starting form equation \eqref{GST} with $\Lambda=\alpha^{-1}$, we have
\begin{equation}\label{eq:2}
\dot Y(s)= y^{\alpha-a-1}(y')^{1-b}=y^{\alpha-a-1}(Ay^m+By^n)^{\frac{1-b}{l}}=Y^{\frac{\alpha-a-1}{\alpha}}(AY^{\frac m\alpha}+BY^{\frac n\alpha})^{\frac{1-b}{l}},
\end{equation}
or equivalently

\begin{equation*}
\dot Y^\frac{l}{1-b}= A\,Y^{\frac{l\,(\alpha-a-1)+m(1-b)}{\alpha\,(1-b)}}+B\,Y^{\frac{l\,(\alpha-a-1)+n(1-b)}{\alpha\,(1-b)}}.
\end{equation*}
That is to say
\begin{equation}\label{eq:3}
\dot Y^{L}= A\,Y^{M}+B\,Y^{N},
\end{equation}
with 
\begin{equation*}
L=\frac l{1-b},\quad M=\frac m\alpha+\frac{l\,(\alpha-a-1)}{\alpha\,(1-b)},\quad N=\frac n\alpha+\frac{l\,(\alpha-a-1)}{\alpha\,(1-b)}.
\end{equation*}
\end{proof}

\begin{remark} 
Let us observe the use of the parameter $\Lambda$ in Lemma~\ref{lemma:beta0_EDOs}, as noted in Remark~\ref{kscal}, the choice of which made here avoids having to redefine the new independent variable $s$ in order to preserve the coefficients $A$ and $B$.
\end{remark}

Taking into account that the transformations $\, _{\Lambda}\mathcal S_{\alpha}^{a,b}$ are composable and invertible, we see that the procedure in Lemma~\ref{lemma:beta0_EDOs} allows us to perform shifts in the family of equations~\eqref{eq:EDO}.\\

It is interesting to study how the generalized trigonometric functions, in the sense of Drábek and Manásevich~\cite{DRA99}, are affected by the group of transformations $_{\Lambda}\mathcal S_{\alpha}^{a,b}$, since these functions are involved in certain solutions of the NRT nonlinear Schrödinger equation, at least for the free particle case~\cite{GPPT25}.
\begin{coro}[Generalized trigonometric functions]
	The generalized sine and hyperbolic sine functions, $\sin_{l,n}$ and $\sinh_{l,n}$, satisfy Eq.~\eqref{eq:EDO} with $A=1,m=0$ and $B=-1$, and $A=1,m=0$ and $B=1$, i.e., 
	\begin{equation}\label{eq:pyth}
		(y')^l + y^n=1 \quad {\rm and}\quad (y')^l - y^n=1,
	\end{equation}
respectively. Applying Lemma~\ref{lemma:beta0_EDOs}, it follows that
	$$
	L=\frac{l}{1-b},\quad M=\frac{l(\alpha-a-1)}{\alpha(1-b)},\quad N=\frac n\alpha+\frac{l(\alpha-a-1)}{\alpha(1-b)}.
	$$
	If $a=\alpha-1$, one obtains
	$$
	L=\frac{l}{1-b},\quad M=0,\quad N=\frac n\alpha\, .
	$$
Thus, recalling that in  Lemma~\ref{lemma:beta0_EDOs} $\Lambda=\alpha^{-1}$,
	\begin{equation}
		\,_{\alpha^{-1}}\mathcal S_{\alpha}^{\alpha-1,b}[\sin_{l,n}]=\sin_{\frac{l}{1-b},\frac n\alpha},\qquad \,_{\alpha^{-1}}\mathcal S_{\alpha}^{\alpha-1,b}[\sinh_{l,n}]=\sinh_{\frac{l}{1-b},\frac n\alpha}.
	\end{equation}
	In addition, if $l=n$ and $b=1-\alpha$, these relations yield
	\begin{equation}
	\,_{\alpha^{-1}}\mathcal S_{\alpha}^{\alpha-1,1-\alpha}[\sin_{n}]=\sin_{\frac n\alpha},\qquad \,_{\alpha^{-1}}\mathcal S_{\alpha}^{\alpha-1,1-\alpha}[\sinh_{n}]=\sinh_{\frac n\alpha}.
	\end{equation}
	On the other hand, imposing $l=1$ in the equations~\eqref{eq:pyth}, the corresponding function $y$ is a generalized hyperbolic tangent function, $\tanh_n$, in the first case, and a generalized tangent function, $\tan_n$, in the second.
	Again setting $a=\alpha-1$, but this time taking also $b=0$, so that
	$$
	L=1,\quad M=0,\quad N=\frac n\alpha\, ,
	$$
	we obtain the relations
	\begin{equation}\label{tanhtan}
		\,_{\alpha^{-1}}\mathcal S_{\alpha}^{\alpha-1,0}[\tanh_{n}]=\tanh_{\frac n\alpha},\qquad \,_{\alpha^{-1}}\mathcal S_{\alpha}^{\alpha-1,0}[\tan_{n}]=\tan_{\frac n\alpha}.
	\end{equation}
\end{coro}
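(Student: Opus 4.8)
The plan is to reduce everything to Lemma~\ref{lemma:beta0_EDOs} once the generalized trigonometric functions have been characterised as solutions of Cauchy problems of the form~\eqref{eq:EDO}. First I would recall from~\cite{DRA99} that $\sin_{l,n}$ is, near the origin, the inverse of the increasing map $t\mapsto\int_0^t(1-\tau^n)^{-1/l}\,d\tau$, so that it is the solution of $(y')^l+y^n=1$ with $y(0)=0$ and $y'(0)=1$; replacing $1-\tau^n$ by $1+\tau^n$ produces $\sinh_{l,n}$, the solution of $(y')^l-y^n=1$ with the same initial data. This already gives the first assertion of the corollary, that $\sin_{l,n}$ and $\sinh_{l,n}$ satisfy~\eqref{eq:EDO} with $(A,m,B)=(1,0,-1)$ and $(1,0,1)$, respectively; specialising $l=1$ gives $\tanh_n$ and $\tan_n$ as the solutions of $y'=1-y^n$ and $y'=1+y^n$, while $\sin_n=\sin_{n,n}$ and $\sinh_n=\sinh_{n,n}$.

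Next I would apply Lemma~\ref{lemma:beta0_EDOs} with $\Lambda=\alpha^{-1}$ and $(A,m,B)=(1,0,\mp1)$. Setting $m=0$ in~\eqref{eq:lL} gives at once $L=l/(1-b)$, $M=l(\alpha-a-1)/(\alpha(1-b))$ and $N=n/\alpha+l(\alpha-a-1)/(\alpha(1-b))$, as displayed; imposing $a=\alpha-1$ makes the common term $l(\alpha-a-1)/(\alpha(1-b))$ vanish, so that $M=0$ and $N=n/\alpha$. Lemma~\ref{lemma:beta0_EDOs} then tells us that $\,_{\alpha^{-1}}\mathcal S_\alpha^{\alpha-1,b}[\sin_{l,n}]$ solves $(\dot Y)^{l/(1-b)}+Y^{n/\alpha}=1$ and $\,_{\alpha^{-1}}\mathcal S_\alpha^{\alpha-1,b}[\sinh_{l,n}]$ solves $(\dot Y)^{l/(1-b)}-Y^{n/\alpha}=1$, which are exactly the ODEs defining $\sin_{l/(1-b),\,n/\alpha}$ and $\sinh_{l/(1-b),\,n/\alpha}$. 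The two remaining identities then follow by further specialisation: $l=n$ and $b=1-\alpha$ give $1-b=\alpha$, hence $L=N=n/\alpha$ and the $\sin_n$, $\sinh_n$ formulas; $l=1$ and $b=0$ give $L=1$, $N=n/\alpha$ and the $\tanh_n$, $\tan_n$ formulas in~\eqref{tanhtan}.

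The one point that needs care beyond routine substitution is matching the normalisation, since coincidence of the ODEs only identifies $Y$ with the claimed function up to the remaining freedom in the solutions of~\eqref{eq:EDO'}. Here I would use Remark~\ref{rem_GST_2}: the new independent variable $s$ is defined only up to an additive constant, which I fix by requiring $s=0$ at the point $x_0$ where the original function vanishes; then $Y(0)=y(x_0)^\alpha=0$, and since $a=\alpha-1$ reduces $\dot Y=y^{\alpha-a-1}(y')^{1-b}$ to $(y')^{1-b}$, the initial value $y'(x_0)=1$ forces $\dot Y(0)=1$, so $Y$ solves the same Cauchy problem as the corresponding generalized trigonometric function and the two agree; the hyperbolic case and the two specialisations are handled identically. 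The genuinely delicate ingredient — and the main obstacle to a fully rigorous statement — is the meaning of the power $y^\alpha$ when $y$ changes sign and $\alpha$ is not an integer; I would deal with this by restricting to the maximal interval around $x_0$ on which the functions involved are positive (equivalently, working in the formal sense already adopted for the GST throughout the paper), on which interval all of the above is rigorous and the stated relations hold.
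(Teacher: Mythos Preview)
Your proposal is correct and follows the same route as the paper: the corollary is obtained by direct substitution into Lemma~\ref{lemma:beta0_EDOs} with $(A,m,B)=(1,0,\mp1)$, followed by the parameter specialisations $a=\alpha-1$, then $l=n$, $b=1-\alpha$, and finally $l=1$, $b=0$. The paper in fact gives no separate proof --- all of the computational steps are embedded in the statement itself --- so your write-up simply fleshes out those inline derivations.

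The extra ingredient you supply, namely matching initial data via Remark~\ref{rem_GST_2} so that $Y$ is not merely \emph{a} solution of the transformed ODE but \emph{the} generalized sine (etc.), is a genuine addition: the paper is content to identify the ODEs and does not discuss the normalisation or the sign/domain issues you raise. Your treatment of these points is sound and makes the identification rigorous on the interval of positivity, which is more than the paper claims.
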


\begin{remark}
	We note that the transformation $\,_{\alpha^{-1}}\mathcal S_{\alpha}^{\alpha-1,0}$
	in Eq.~\eqref{tanhtan}, which has both $\beta=0$ and $b=0$, is in fact a standard Sundman transformation of power type.
\end{remark}

\section{On a class of autonomous second order differential equations}

In this section, we explore the connection between the power-like generalized Sundman transformations introduced in Definition~\ref{def:GSTs} of the previous section and the family of differential equations given by
$$
y''+\mathcal K\frac{y'^2}{y}=y^Ay'^Bf(y^\alpha y'^\beta).
$$
The results obtained here, in Proposition~\ref{prop:sols1} and Corollary~\ref{coro:EDO2} below, will then be used in the next section, in our discussion of solutions of the NRT nonlinear Schr\"odinger equation. In addition to the above equation with arbitrary function $f$, we will also consider here a particular example, which, in addition to providing an explicit illustration of the steps involved in deriving solutions via GSTs, also has the advantage that the results obtained are easily checked.

From now on, i.e., in this and the following section, the parameter $\Lambda$ will be set equal to unity. For the sake of readability, we will use the following notation: $_{1}\mathcal S_{\alpha,\beta}^{a,b}\equiv\mathcal S_{\alpha,\beta}^{a,b}$, and similarly $_{1}\widetilde{\mathcal S}_{\alpha,\beta}^{a,b}\equiv \widetilde{\mathcal  S}_{\alpha,\beta}^{a,b}.$
\begin{prop}\label{prop:sols1}
	Given $\alpha,\beta, A, B$ such that $\frac{\alpha}{\beta}=\frac{A+1}{B-2}$, then the solutions of the equation
	\begin{equation}\label{eq:EDO1}
	y''+\mathcal K\frac{y'^2}{y}=y^Ay'^Bf(y^\alpha y'^\beta)
	\end{equation}
	are given by
	\begin{equation}\label{eqprop:EDO1}
	y(x)= \widetilde {\mathcal S}_{\alpha,\beta}^{a,b}[Y](x),\qquad s=\int \frac{dY}{\beta Y^\mu f(Y)+( \alpha-\beta \mathcal K) Y^l}
	\end{equation}
	with
	\begin{equation}
	b=1+\frac{\beta(a+1)}{\alpha},\quad \mu=1+\frac{A-a}\alpha,\quad l=1-\frac{a+1}{\alpha}.
	\end{equation}

\end{prop}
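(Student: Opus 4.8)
The plan is to apply the GST $\mathcal S_{\alpha,\beta}^{a,b}$ (with $\Lambda=1$, as fixed in this section) directly to Eq.~\eqref{eq:EDO1}, show that it collapses to a first-order \emph{autonomous} equation for $Y(s)$, integrate that by quadrature, and then recover $y(x)$ by inverting the GST via Proposition~\ref{prop:inverse}. Concretely, set $Y=y^\alpha y'^\beta$ and $s'(x)=y^a y'^b$. Differentiating $Y$ with respect to $x$ by the product rule gives $\frac{dY}{dx}=\alpha y^{\alpha-1}y'^{\beta+1}+\beta y^\alpha y'^{\beta-1}y''$, into which I substitute $y''=y^Ay'^Bf(y^\alpha y'^\beta)-\mathcal K y'^2/y$ from Eq.~\eqref{eq:EDO1}; the $\mathcal K$-term then merges with the first term, leaving $\frac{dY}{dx}=(\alpha-\beta\mathcal K)y^{\alpha-1}y'^{\beta+1}+\beta y^{\alpha+A}y'^{\beta+B-1}f(Y)$. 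Dividing by $s'(x)=y^a y'^b$ yields $\dot Y=(\alpha-\beta\mathcal K)y^{\alpha-1-a}y'^{\beta+1-b}+\beta y^{\alpha+A-a}y'^{\beta+B-1-b}f(Y)$.

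The crux is to check that, for the stated choice of $b$ and under the hypothesis $\alpha/\beta=(A+1)/(B-2)$, both monomials in $y,y'$ are exact powers of $Y=y^\alpha y'^\beta$. For the first term this requires $(\alpha-1-a)/\alpha=(\beta+1-b)/\beta$, which is exactly $b=1+\beta(a+1)/\alpha$, and gives exponent $l=1-(a+1)/\alpha$; for the second term, imposing $(\alpha+A-a)/\alpha=(\beta+B-1-b)/\beta=\mu$ gives $\mu=1+(A-a)/\alpha$ and, after eliminating $b$, reduces precisely to $\alpha/\beta=(A+1)/(B-2)$. Hence the reduced equation is the autonomous first-order ODE $\dot Y=\beta Y^\mu f(Y)+(\alpha-\beta\mathcal K)Y^l$, which separates to $s=\int dY/\big(\beta Y^\mu f(Y)+(\alpha-\beta\mathcal K)Y^l\big)$, i.e.\ the second formula in Eq.~\eqref{eqprop:EDO1}. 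Note that $a$ is untouched by these constraints, which is why it survives as a free parameter in the statement.

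Finally, to conclude that $y(x)=\widetilde{\mathcal S}_{\alpha,\beta}^{a,b}[Y](x)$ gives the solutions, I invoke Proposition~\ref{prop:inverse}: since $\beta\neq0$ (built into the hypothesis $\alpha/\beta=(A+1)/(B-2)$, together with $B\neq2$), the relation $y=\widetilde{\mathcal S}_{\alpha,\beta}^{a,b}[Y]$ is equivalent to $Y=\mathcal S_{\alpha,\beta}^{a,b}[y]$, so the computation above sets up a bijective correspondence (for $y,y'\neq0$) between solutions of Eq.~\eqref{eq:EDO1} and solutions of the reduced first-order equation; the backward direction just reverses the manipulations, multiplying by $s'(x)$ and solving for $y''$, which is licit because $\beta\neq0$. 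I do not anticipate a genuine obstacle here: the only real work is careful exponent bookkeeping and recording which constraints are \emph{forced} (the value of $b$, via the first monomial, and the ratio $\alpha/\beta$, via the second) versus \emph{free} (the parameter $a$), plus the mild caveat that constant and otherwise degenerate branches (where $y'\equiv0$ or $y\equiv0$) are not captured by this representation.
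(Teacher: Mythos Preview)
Your proof is correct and follows essentially the same route as the paper's: compute $\dot Y$ from the GST, use Eq.~\eqref{eq:EDO1} to eliminate $y''$, and then observe that the two monomials in $y,y'$ become pure powers $Y^l$ and $Y^\mu$ precisely when $b=1+\beta(a+1)/\alpha$ and $\alpha/\beta=(A+1)/(B-2)$, giving the autonomous first-order equation which is solved by separation and inverted via Proposition~\ref{prop:inverse}. The only cosmetic difference is ordering: the paper first divides by $s'(x)$ and factors $\dot Y=\beta y^{\alpha-a}y'^{\beta-1-b}\bigl(y''+\tfrac{\alpha}{\beta}\tfrac{y'^2}{y}\bigr)$ before substituting the ODE, whereas you substitute the ODE into $dY/dx$ first and then divide; the exponent bookkeeping and the resulting constraints are identical.
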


\begin{proof}
	
	We start by considering two functions $y(x)$ and $Y(s)$ related through the generalized Sundman transform as given in Definition~\ref{def:GSTs}, $Y=\GSTs[y]$. We will denote $\dot Y(s)=\frac {d}{ds} Y(s)$, and $y'(x)=\frac {d}{dx} y(x).$ Then from Eqs.~\eqref{eq:GSTs1} and~\eqref{eq:GSTs2} follows
	\begin{equation}\label{eq:dotY}
	\dot Y(s)=\alpha y^{\alpha-1-a}y'^{\beta+1-b}+\beta y^{\alpha-a}y'^{\beta-1-b}y ''=\beta y^{\alpha-a}y'^{\beta-1-b}\left(y''+\frac{\alpha}{\beta}\frac{y'^2}{y}\right),
	\end{equation}
 and so
	\begin{equation}\label{eq:dotY2}
	\dot Y(s)+\left(\beta \mathcal K-\alpha \right) Y(s)^l=\beta y^{\alpha-a}y'^{\beta-1-b}\left(y''+\frac{\alpha}{\beta}\frac{y'^2}{y}\right) + \left(\beta \mathcal K-\alpha \right) y^{\alpha l} y'^{\beta l}.
	\end{equation}
	Now we select $l$ such that 
	\begin{equation}
	\begin{cases}
	\alpha l=\alpha-a-1,\\
	\beta l= \beta -b +1,
	\end{cases}
	\end{equation}
	which is possible only when 
	\begin{equation}\label{eq:cond1}
	\frac{a+1}\alpha=\frac{b-1}\beta,
	\end{equation}
	or equivalently $b=1+\frac{\beta(a+1)}{\alpha}$. We can then rewrite Eq.~\eqref{eq:dotY2} as 
	\begin{equation}\label{eq:dotY3}
	\dot Y(s)+\left(\beta \mathcal K-\alpha \right)Y(s)^l=\beta y^{\alpha-a}y'^{\beta-1-b}\left(y''+\mathcal K\frac{y'^2}{y}\right)
	\end{equation}
	Now, let us suppose that $y(x)$ is a function such that  $y''+\mathcal K\frac{y'^2}{y}=y^Ay'^Bf(y^\alpha y'^\beta).$ In this case, Eq.~\eqref{eq:dotY3} becomes
	\begin{equation}\label{eq:dotY4}
	\dot Y(s)+\left(\beta \mathcal K-\alpha \right)Y(s)^l= \beta y^{\alpha-a+A}y'^{\beta-1-b+B}f(y^\alpha y'^\beta),
	\end{equation}
	and we now select the parameters such that 
	\begin{equation}
	\alpha-a+A=\mu \alpha,\qquad\beta-1-b+B=\mu \beta, 
	\end{equation}
	for some $\mu\in\mathbb R$, which is possible only when
	\begin{equation}\label{eq:cond2}
	\frac{a-A}{\alpha}=\frac{1+b-B}\beta,
	\end{equation}
	or equivalently $b=\frac\beta\alpha(a-A)+B-1.$ Note that conditions in Eqs.~\eqref{eq:cond1} and~\eqref{eq:cond2} imply
	\begin{equation}
	\frac{\alpha}{\beta}=\frac{A+1}{B-2}.
	\end{equation}
	In this case, we can substitute Eq.~\eqref{eq:GSTs1} in Eq.~\eqref{eq:dotY4} to obtain
	\begin{equation}
	\dot Y(s)+\left(\beta \mathcal K-\alpha \right)Y(s)^l=\beta \left(y^{\alpha}y'^{\beta}\right)^\mu f(Y)=\beta Y(s)^\mu f(Y(s)),
	\end{equation}
	which is a first order autonomous equation in $Y$ and $s$, whose solution is given by
	\begin{equation}
	\int \frac{dY}{\beta Y^\mu f(Y)+( \alpha-\beta \mathcal K) Y^l}= s+K_0. 
	\end{equation}
	This solution $Y=Y(s+K_0)$ is the GST of the sought-after solution $y(x)$ of Eq.~\eqref{eq:EDO1}.
	Finally, bearing in mind that Eq.~\eqref{eq:EDO1} is an autonomous equation, we can set ${K_0}=0$  (see Remarks~\ref{rem_GST_2} and ~\ref{eqsols}), and then by Proposition~\ref{prop:inverse} we have that 
	\begin{equation}
	y(x)=\widetilde {\mathcal S}_{\alpha,\beta}^{a,b}[Y](x),
	\end{equation}
	which finishes the proof. 
\end{proof}

\begin{coro} \label{coro:EDO2}
Given $\gamma \in \mathbb{R}$ and an arbitrary function $g$, then the general solution of the family of autonomous second order differential equations 
	\begin{equation}\label{eq:EDO2}
	\frac{yy''}{y'^2}=g(y^\gamma\, y')
	\end{equation}
is given by 
	\begin{equation}\label{eq:EDO2_sol}
	y(x)=\widetilde {\mathcal S}_{\gamma\beta,\beta}^{a,b}[Y](x),\qquad s=\int\frac{Y^{\frac{a+1}{\beta\gamma}-1}}{\beta\gamma+ \beta g\left(Y^{1/\beta}\right)}dY ,
	\end{equation}
	with $b=1+\frac{a+1}{\gamma},$ and for arbitrary values of the parameters $a$ and $\beta$. Letting $\beta=1$ and choosing $a=\gamma-1$, we have that $\frac{a+1}{ \gamma}-1 =0$ which in turn implies $b=2$. This then allows a simplification of the result~\eqref{eq:EDO2_sol}, and the solution of Eq.~\eqref{eq:EDO2} can be expressed as
	\begin{equation}\label{eq:EDO2_simple}
y(x)= \widetilde {\mathcal S}^{\gamma-1,2}_{\gamma,1}[Y](x),\qquad s=\int\frac{dY}{\gamma+g(Y)}.
	\end{equation}
	\begin{proof}
On the one hand, Eq.~\eqref{eq:EDO1} can be written as 
		\begin{equation}
		\frac{y y ''}{y'^2}=(y^{\frac{A+1}{B-2}}y')^{B-2}\tilde f(y^\frac\alpha\beta  y')-\mathcal K,
		\end{equation}
		with $\tilde f(t)=f(t^\beta)$. Then, using $\frac{A+1}{B-2}=\frac{\alpha}{\beta}$, and setting $\alpha=\beta\gamma$,
		
		\begin{equation}
		\frac{y y ''}{y'^2}=(y^{\frac{\alpha}{\beta}}y')^{B-2}\tilde f(y^\frac\alpha\beta y')-\mathcal K=g(y^\gamma y'),
		\end{equation}
		with $g(t)=t^{B-2}\tilde f(t)-\mathcal K$, or equivalently $f(t)=t^{\frac{2-B}\beta}(g(t^\frac1\beta)+\mathcal K).$  We thus obtain Eq.~\eqref{eq:EDO2}.
		
 On the other hand, the solution Eq.~\eqref{eqprop:EDO1} of Eq.~\eqref{eq:EDO1},  with the function $f$ being  $f(t)=t^{\frac{2-B}\beta}(g(t^\frac1\beta)+\mathcal K)$ and using $l=1-\frac{a+1}{\alpha}$ and $\mu-l=\frac{A+1}{\alpha}=\frac{B-2}{\beta}$, gives 
\begin{equation}
y(x)= \widetilde {\mathcal S}_{\alpha,\beta}^{a,b}[Y](x),\qquad s=\int\frac{Y^{-l}}{\beta Y^{\mu-l}Y^{\frac{2-B}\beta}(g(Y^\frac1\beta)+\mathcal K) +\alpha-\beta{\mathcal K}}\, dY
=\int\frac{Y^{\frac{a+1}{\alpha}-1}}{\alpha+ \beta g\left(Y^{1/\beta}\right)}\, dY ,
\end{equation}
where $b=1+\frac{\beta(a+1)}{\alpha}$. Setting $\alpha=\beta\gamma$, as in the first part of the proof, then gives the results in~\eqref{eq:EDO2_sol}, where now $b=1+\frac{a+1}{\gamma}$ and $a$ and $\beta$ are left arbitrary. The simplification via $\beta=1$ and $a=\gamma-1$ in order to obtain the expression~\eqref{eq:EDO2_simple} for the solution is immediate. This then finishes the proof.
\end{proof}
\end{coro}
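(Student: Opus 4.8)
The plan is to realise Eq.~\eqref{eq:EDO2} as a special case of Eq.~\eqref{eq:EDO1} and then apply Proposition~\ref{prop:sols1}. First I would divide Eq.~\eqref{eq:EDO1} by $y'^2/y$ to get $\frac{yy''}{y'^2}+\mathcal K = y^{A+1}y'^{B-2}f(y^\alpha y'^\beta)$. Under the standing constraint $\frac{\alpha}{\beta}=\frac{A+1}{B-2}$ of Proposition~\ref{prop:sols1}, the right-hand side depends on $y$ and $y'$ only through $t:=y^{\alpha/\beta}y'$, since $y^{A+1}y'^{B-2}=t^{B-2}$ and $y^\alpha y'^\beta = t^\beta$; writing $\tilde f(t)=f(t^\beta)$ it becomes $t^{B-2}\tilde f(t)$. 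Choosing $\alpha=\beta\gamma$ (so that $\alpha/\beta=\gamma$) and setting $g(t):=t^{B-2}\tilde f(t)-\mathcal K$ produces precisely Eq.~\eqref{eq:EDO2}; conversely, a prescribed $g$ and $\gamma$ are obtained this way by taking $f(t)=t^{(2-B)/\beta}(g(t^{1/\beta})+\mathcal K)$, where $B$ and $\mathcal K$ are free auxiliary parameters.

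Next I would substitute this $f$ into the solution~\eqref{eqprop:EDO1} of Proposition~\ref{prop:sols1}. The crucial algebraic point is that $\mu-l=\frac{A+1}{\alpha}=\frac{B-2}{\beta}$, so $Y^\mu f(Y)=Y^{l}\,Y^{\mu-l}\,Y^{(2-B)/\beta}(g(Y^{1/\beta})+\mathcal K)=Y^{l}(g(Y^{1/\beta})+\mathcal K)$, the exponent collapsing because $\frac{B-2}{\beta}+\frac{2-B}{\beta}=0$. Hence the denominator $\beta Y^\mu f(Y)+(\alpha-\beta\mathcal K)Y^l$ equals $Y^l(\alpha+\beta g(Y^{1/\beta}))$; both $\mathcal K$ and $B$ drop out, and using $-l=\frac{a+1}{\alpha}-1$ and $\alpha=\beta\gamma$ one reads off $s=\int \frac{Y^{(a+1)/(\beta\gamma)-1}}{\beta\gamma+\beta g(Y^{1/\beta})}\,dY$, while $b=1+\frac{\beta(a+1)}{\alpha}=1+\frac{a+1}{\gamma}$. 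Since $a$ and $\beta$ were never pinned down, they remain free, yielding the representation $y(x)=\widetilde{\mathcal S}_{\gamma\beta,\beta}^{a,b}[Y](x)$.

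For the simplified form I would just specialise: $\beta=1$ and $a=\gamma-1$ make $(a+1)/(\beta\gamma)-1=0$ and force $b=2$, turning the transformation into $\widetilde{\mathcal S}^{\gamma-1,2}_{\gamma,1}$ and the quadrature into $s=\int dY/(\gamma+g(Y))$. To justify calling this the \emph{general} solution I would note that Eq.~\eqref{eq:EDO2} is autonomous of second order: the first-order autonomous reduction contributes one constant of integration (the one in the quadrature for $s$) and the inverse transform of Definition~\ref{def:invGSTs} contributes a second ($K_1$ or $K_3$), while the redundant shift $K_0$ in $Y(s+K_0)$ may be set to zero as in Remark~\ref{eqsols}; two independent constants for a second-order equation give the general solution generically.

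The step I expect to demand the most care is the bookkeeping in the second paragraph --- checking that the cancellation $Y^{\mu-l}\cdot Y^{(2-B)/\beta}=1$ genuinely follows from the relations among $a,b,\alpha,\beta,A,B$ inherited from Proposition~\ref{prop:sols1}, and that the auxiliary quantities $B$ and $\mathcal K$, introduced only to exhibit Eq.~\eqref{eq:EDO2} as an instance of Eq.~\eqref{eq:EDO1}, really disappear from the final formula, leaving an answer depending only on $g$, $\gamma$ and the free parameters $a,\beta$. Everything else is routine substitution.
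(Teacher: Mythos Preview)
Your proposal is correct and follows essentially the same route as the paper: rewrite Eq.~\eqref{eq:EDO1} in the form $\frac{yy''}{y'^2}=t^{B-2}\tilde f(t)-\mathcal K$ with $t=y^{\alpha/\beta}y'$ and $\tilde f(t)=f(t^\beta)$, identify this with $g(y^\gamma y')$ via $\alpha=\beta\gamma$ and $f(t)=t^{(2-B)/\beta}(g(t^{1/\beta})+\mathcal K)$, then feed this $f$ into~\eqref{eqprop:EDO1} and use $\mu-l=\frac{B-2}{\beta}$ to collapse the exponent so that $B$ and $\mathcal K$ drop out. Your added remark on why the two free constants justify the label ``general solution'' is a welcome clarification the paper leaves implicit here (cf.\ Remark~\ref{eqsols}).
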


\begin{remark}
	The family of quasi-homogeneous equations 
$y'=\frac{y}{x} g(x^\gamma y)$
	is directly derived from Eq.~\eqref{eq:EDO2} by the substitution $y'=w(y)$ and relabelling $w\mapsto y$ and $y\mapsto x$. The change of variable $z(x)=x^\gamma y(x)$ then yields the separable equation $z'=\frac{z}{x}(\gamma+g(z))$ ~\cite[Section 1.7.1]{PZ03}, so giving an alternative approach to solving Eq.~\eqref{eq:EDO2} to that given above. In either approach, the problem of solving Eq.~\eqref{eq:EDO2} is reduced to a sequence of quadratures (and inverting one function).


\end{remark}

	\noindent{\bf A simple example.}
	
\noindent	
Let us consider the case $f=0$ of Eq.~\eqref{eq:EDO1}, i.e., 
\begin{equation}\label{sime}
y''+{\mathcal K}\frac{y'^2}{y}=0.
\end{equation} 
This equation is trivially solved, and has general solution $y(x)=(C_0x+C_1)^{\frac{1}{\mathcal K+1}}$ for $\mathcal K\neq-1$, where $C_0$ and $C_1$ are two arbitrary constants, or $y(x)=e^{D_0x+D_1}$ for $\mathcal K=-1$, where $D_0$ and $D_1$ are again two arbitrary constants. Let us now consider solving this equation via the GST, using the results given in Proposition~\ref{prop:sols1}.

First of all, the integral in Eq.~\eqref{eqprop:EDO1} gives the GST of $y(x)$:
\begin{equation}
Y(s)=\left(\frac{a+1}{\alpha}(\alpha-\beta\mathcal K)s\right)^\frac{\alpha}{a+1}.
\end{equation}
We must now use the inverse GST, as specified in Eq.~\eqref{eqprop:EDO1}, to obtain $y(x)$. The relation $b=1+\frac{\beta(a+1)}{\alpha}$ corresponds to the case $\Omega=-1$ of the inverse GST in Definition~\ref{def:invGSTs}, and so we have
\begin{eqnarray}\label{gint}
\mathcal G(s) & = & K_3\,\exp\left(\int Y(s)^{\frac{1-b}\beta}ds\right)
=K_3\,\exp\left(\int Y(s)^{-\frac{a+1}\alpha}ds\right) \nonumber \\
& = & K_3\,\exp\left(\int \left(\frac{a+1}{\alpha}(\alpha-\beta\mathcal K)s\right)^{-1}ds\right)=K_3\, s^{\frac{\alpha}{a+1}\frac{1}{\alpha-\beta\mathcal K}}.
\end{eqnarray}
Then
\begin{equation}
x-K_2=\int \frac{1}{\mathcal G(s)^{a-\frac {\alpha b}\beta}Y(s)^{\frac b\beta}}\, ds=\int \frac{1}{Es^p}\, ds
\end{equation}
where
\begin{equation}
E = K_3^{a-\frac{\alpha b}{\beta}}\left(\frac{a+1}{\alpha}(\alpha-\beta\mathcal K)\right)^{\frac{\alpha}{a+1}\frac{b}{\beta}}, \quad
p = \frac{\alpha}{a+1}\frac{1}{\beta}\left(\frac{a\beta-\alpha b}{\alpha-\beta \mathcal K}+b\right).
\end{equation}
Substituting $b=1+\frac{\beta(a+1)}{\alpha}$, we find
\begin{equation}
p=\frac{a(\alpha-\beta\mathcal K)-\mathcal K(\alpha+\beta)}{(a+1)(\alpha-\beta\mathcal K)}=1-\frac{\alpha(\mathcal K+1)}{(a+1)(\alpha-\beta\mathcal K)}.
\end{equation}
So
\begin{equation}\label{pint}
x-K_2=\int \frac{1}{Es^p}\, ds=\frac{s^{1-p}}{E(1-p)}
=\frac{(a+1)(\alpha-\beta\mathcal K)}{E\alpha(\mathcal K+1)}s^\frac{\alpha(\mathcal K+1)}{(a+1)(\alpha-\beta\mathcal K)}
\end{equation}
for $p\neq 1$, i.e., $\mathcal K\neq -1$, and, substituting from Eq.~\eqref{pint} into Eq.~\eqref{gint}, we obtain
\begin{equation}
y(x)=\mathcal G(s)=K_3\, s^{\frac{\alpha}{a+1}\frac{1}{\alpha-\beta\mathcal K}}
=K_3\left(\frac{E\alpha(\mathcal K+1)}{(a+1)(\alpha-\beta\mathcal K)}\right)^\frac{1}{\mathcal K+1}(x-K_2)^\frac{1}{\mathcal K+1}.
\end{equation}
This is precisely the general solution $y(x)=(C_0x+C_1)^{\frac{1}{\mathcal K+1}}$ for $\mathcal K\neq-1$, with $C_0$ being defined in terms of $K_3$ and $C_1$ in terms of both $K_2$ and $K_3$. On the other hand, for the case $p=1$, i.e., $\mathcal K=-1$, evaluating the first integral in Eq.~\eqref{pint} and inverting, we obtain $s=e^{E(x-K_2)}$ and so
\begin{equation}
y(x)=\mathcal G(s)=K_3\, s^{\frac{\alpha}{a+1}\frac{1}{\alpha-\beta\mathcal K}}=K_3\, e^{\frac{\alpha E}{(a+1)(\alpha+\beta)}(x-K_2)}.
\end{equation}
This is the general solution $y(x)=e^{D_0x+D_1}$ for $\mathcal K=-1$, with $D_0$ being defined in terms of $K_3$ and $D_1$ in terms of both $K_2$ and $K_3$.

Clearly, we do not advocate using GSTs in order to solve Eq.~\eqref{sime}, although the above calculations could be simplified by a suitable choice of parameters. Our aim in this example is to illustrate the steps to be taken in using GSTs to solve differential equations, moreover for an equation where the results are easily verified.

\section{Applications to the NRT nonlinear Schrödinger equation}

We now consider the application of the GST to Eq.~\eqref{eq:Psi_eq+travGen} with
$\Xi(\psi,\psi')$ as given in Eq.~\eqref{eq:Gamma_cond}.
From now on, $\psi(z)$ denotes the solution of the original second order differential equation Eq.~\eqref{eq:Psi_eq+travGen} which we aim to solve, and $\Psi(s)$ denotes the solution of the transformed ODE under the GST. 
Let us note that the structure of  $\Xi(\psi,\psi')$ as given in Eq.~\eqref{eq:Gamma_cond} includes such relevant cases as the zero potential $\Xi(\psi,\psi')=0$ and $\Xi(\psi,\psi')=\lambda\psi^q.$ One can easily verify that these are the only cases in which the term $\Xi(\psi,\psi')$ of the form~\eqref{eq:Gamma_cond} depends only on $\psi$.
Let us also recall, as mentioned earlier, that in this section we take $\Lambda=1$.
We first of all consider the case \eqref{eq:Gamma_cond} in general, and then turn to the study of various subcases.

\subsection{The general case $\Xi(\psi,\psi')=\hbar \omega \psi'(z)-\frac{(\psi')^2}{\psi^{q}}\,\overline g\left(\psi^\gamma \psi'\right)$}

\begin{teorema}\label{Th:solutionsNRT}
	The solution of Eq.~\eqref{eq:Psi_eq+travGen}
	\begin{equation*}
	\hbar \omega \psi'(z) = \frac{\hbar^2k^2}{2m}\left[\psi^{1-q}\psi''(z)+(1-q)\psi^{-q}(z)(\psi'(z))^2\right]+\Xi(\psi, \psi')\, 
	\end{equation*} 
	under the condition~\eqref{eq:Gamma_cond}
	\begin{equation*}
	\Xi(\psi, \psi')=\hbar \omega \psi'(z)-\frac{(\psi')^2}{\psi^{q}}\,\overline g\left(\psi^\gamma \psi'\right)	
	\end{equation*} 
	is given by
	\begin{equation}\label{eq:gen_sol}
	\psi(z)=\widetilde{\mathcal S}_{\beta\gamma,\,\beta}^{a,b}[\Psi],\qquad \beta s=
	\int\frac{\Psi^{\frac{a+1-\beta\gamma}{\beta\gamma}}}{\gamma+q-1+\frac{2m}{(\hbar k)^2} \overline g(\Psi^{1/\beta})} \, d\Psi,
	\end{equation}
	with $a\in\mathbb R$ and $b=1+\frac{a+1}{\gamma}$. Letting $\beta=1$ and $a=\gamma-1$ this last equation reads
	\begin{equation}\label{eq:par_sol}
	\psi(z)=\widetilde{\mathcal S}_{\gamma,\,1}^{\gamma-1,\,2}[\Psi],\qquad s =
	\int\frac{1}{\gamma+q-1+\frac{2m}{(\hbar k)^2} \overline g(\Psi)}\,d\Psi.
	\end{equation}
\end{teorema}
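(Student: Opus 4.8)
The plan is to reduce Theorem~\ref{Th:solutionsNRT} to Corollary~\ref{coro:EDO2} by rewriting Eq.~\eqref{eq:Psi_eq+travGen} under the condition~\eqref{eq:Gamma_cond} in the normal form $\frac{\psi\psi''}{\psi'^2}=g(\psi^\gamma\psi')$ studied there, with an explicit identification of the function $g$. First I would substitute $\Xi(\psi,\psi')=\hbar\omega\psi'(z)-\frac{(\psi')^2}{\psi^q}\overline g(\psi^\gamma\psi')$ into Eq.~\eqref{eq:Psi_eq+travGen}; the term $\hbar\omega\psi'(z)$ cancels the left-hand side, leaving
\begin{equation*}
0=\frac{\hbar^2k^2}{2m}\left[\psi^{1-q}\psi''+(1-q)\psi^{-q}(\psi')^2\right]-\frac{(\psi')^2}{\psi^q}\overline g(\psi^\gamma\psi').
\end{equation*}
Multiplying through by $\frac{2m}{\hbar^2k^2}\psi^{q}(\psi')^{-2}$ gives
\begin{equation*}
\frac{\psi\psi''}{\psi'^2}+(1-q)=\frac{2m}{(\hbar k)^2}\overline g(\psi^\gamma\psi'),
\end{equation*}
i.e. $\frac{\psi\psi''}{\psi'^2}=g(\psi^\gamma\psi')$ with $g(t):=q-1+\frac{2m}{(\hbar k)^2}\overline g(t)$. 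This is exactly Eq.~\eqref{eq:EDO2}.

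Next I would invoke Corollary~\ref{coro:EDO2} verbatim with this $g$. The general-solution formula~\eqref{eq:EDO2_sol} there reads $y(x)=\widetilde{\mathcal S}_{\gamma\beta,\beta}^{a,b}[Y](x)$ with $s=\int\frac{Y^{\frac{a+1}{\beta\gamma}-1}}{\beta\gamma+\beta g(Y^{1/\beta})}\,dY$ and $b=1+\frac{a+1}{\gamma}$. Substituting $g(Y^{1/\beta})=q-1+\frac{2m}{(\hbar k)^2}\overline g(Y^{1/\beta})$ into the denominator gives $\beta\gamma+\beta(q-1)+\beta\frac{2m}{(\hbar k)^2}\overline g(Y^{1/\beta})=\beta\bigl(\gamma+q-1+\frac{2m}{(\hbar k)^2}\overline g(Y^{1/\beta})\bigr)$, so pulling the common factor $\beta$ out of the integral yields precisely Eq.~\eqref{eq:gen_sol}, after renaming $y\to\psi$, $Y\to\Psi$, $x\to z$ and noting $\frac{a+1}{\beta\gamma}-1=\frac{a+1-\beta\gamma}{\beta\gamma}$. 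The specialization $\beta=1$, $a=\gamma-1$ is then read off directly from Eq.~\eqref{eq:EDO2_simple} of the same corollary, with $g(Y)=q-1+\frac{2m}{(\hbar k)^2}\overline g(Y)$ substituted into $s=\int\frac{dY}{\gamma+g(Y)}$, giving Eq.~\eqref{eq:par_sol}.

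There is essentially no serious obstacle here: the theorem is a direct corollary of the already-established Corollary~\ref{coro:EDO2}, and the only thing to be careful about is the bookkeeping of constants and the placement of the factor $\beta$ (the statement writes $\beta s=\int(\cdots)d\Psi$ rather than $s=\int(\cdots)d\Psi$), which must be tracked consistently from~\eqref{eq:EDO2_sol}. I would also briefly check the genericity hypotheses implicit in applying Corollary~\ref{coro:EDO2} — namely that $\psi'\not\equiv0$ and $\psi\not\equiv0$ so that the division by $\psi'^2$ and $\psi^q$ is legitimate, and that the integrand is not identically singular — but these are routine non-degeneracy remarks rather than a real difficulty. The one place where care pays off is confirming that the exponent identifications ($\alpha=\beta\gamma$, $b=1+\frac{a+1}{\gamma}$) inherited from Corollary~\ref{coro:EDO2} are exactly those asserted in the theorem, which they are.
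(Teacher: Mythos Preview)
Your proposal is correct and follows essentially the same approach as the paper: both reduce Eq.~\eqref{eq:Psi_eq+travGen} under condition~\eqref{eq:Gamma_cond} to the form $\frac{\psi\psi''}{(\psi')^2}=g(\psi^\gamma\psi')$ with $g(t)=q-1+\frac{2m}{(\hbar k)^2}\overline g(t)$, and then apply Corollary~\ref{coro:EDO2} directly. Your route is marginally more streamlined (substituting $\Xi$ immediately so the $\hbar\omega\psi'$ term visibly cancels, rather than first rewriting the equation in a general intermediate form before imposing~\eqref{eq:Gamma_cond}), but the substance and all the key identifications are identical.
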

\begin{proof}
	Eq.~\eqref{eq:Psi_eq+travGen} can be written as
	\begin{equation}
	\frac{\psi\psi''}{(\psi')^2}=\frac{2m\omega}{\hbar k^2}\,\frac{\psi^q}{\psi'}\,\left(1-\frac{1}{\hbar\omega}\,\frac{\Xi(\psi,\psi')}{\psi'}\right)+q-1.
	\end{equation}
	Now, we set
	\begin{equation}\label{eq:EQ1}
	\frac{2m\omega}{\hbar k^2}\,\frac{\psi^q}{\psi'}\,\left(1-\frac{1}{\hbar\omega}\,\frac{\Xi(\psi,\psi')}{\psi'}\right)+q-1=g(\psi^\gamma\psi'),
	\end{equation}
	and then, comparing to Eq.~\eqref{eq:EDO2} in  Corollary~\ref{coro:EDO2}, one can write the solution of Eq.~\eqref{eq:Psi_eq+travGen} as a function of $\gamma$ and $g$ using Eq.~\eqref{eq:EDO2_sol}. Now, since
	\begin{equation}\label{eq:cond_pot}
	\Xi(\psi,\psi')=\hbar\omega \psi'-\frac{(\psi')^2}{\psi^q} \overline g(\psi^\gamma\psi'),
	\end{equation}
	 from Eq.~\eqref{eq:EQ1} one has
	
    \begin{equation}\label{eq:gg_funcs}
		g(\psi^\gamma\psi')=\frac{2m}{(\hbar k)^2}\overline g(\psi^\gamma\psi')+q-1.
		\end{equation}
	Finally, substituting Eq.~\eqref{eq:gg_funcs} in Eq.~\eqref{eq:EDO2_sol}, the solution of Eq.~\eqref{eq:Psi_eq+travGen} is obtained as
	\begin{equation}
	\psi(z)=\widetilde{\mathcal S}_{\beta\gamma,\,\beta}^{a,b}[\Psi],\qquad \beta s=
	\int\frac{\Psi^{\frac{a+1-\beta\gamma}{\beta\gamma}}}{\gamma+q-1+\frac{2m}{(\hbar k)^2 } \overline g(\Psi^{1/\beta})} \, d\Psi,
	\end{equation}
    where $b=1+\frac{a+1}{\gamma}$ as stated in Corollary~\ref{coro:EDO2}.	
	Setting $a=\gamma-1$ and $\beta =1$, which results in $b=2$, one obtains a simplified the expression for the solution given by
	\begin{equation}\label{eq:SOLCASEI}
	\psi(z) =\widetilde{\mathcal S}_{\gamma,\,1}^{\gamma-1,\,2}[\Psi],\qquad s=
	\int\frac{d\Psi}{\gamma+q-1+\frac{2m}{(\hbar k)^2} \overline g(\Psi)} \, ,
	\end{equation}
	corresponding to Eq.~\eqref{eq:EDO2_simple} in Corollary~\ref{coro:EDO2}, and thus finishing the proof. 
\end{proof}

\begin{remark}
    It is worth mentioning that for the choice $\gamma=C-q$ the function $\Psi=\Psi(s)$ does not depend on the non-linearity parameter $q$, which means that we can encapsulate the whole non-linearity behavior in the IGST. In particular, letting $\gamma = 1 -q$ and  denoting by $\psi_q$ the solution of Eq.~\eqref{eq:Psi_eq+travGen} under the condition~\eqref{eq:Gamma_cond} for an arbitrary $q$, one has that
    \begin{equation}\label{eq:gen_sol_simply}
    \psi_q(z) =\widetilde{\mathcal S}_{1-q,\,1}^{-q,\,2}[\Psi],\qquad s=\frac{(\hbar k)^2}{2m} 
	\int\frac{d\Psi}{\overline g(\Psi)}.
    \end{equation}
   As a direct consequence of this equation and Proposition~\ref{prop:inverse}, one finds that 
 \begin{equation}
    \Psi(s)={\mathcal S}_{1-q,\,1}^{-q,\,2}[\psi_q]={\mathcal S}_{1-\overline q,\,1}^{-\overline q,\,2}[\psi_{\overline q}]
 \end{equation}
   for any $q$ and $\overline q$, and so 
 \begin{equation}
    \psi_{\overline q}
    =\left(\widetilde{\mathcal S}_{1-\overline q,\,1}^{-\overline q,\,2}\circ{\mathcal S}_{1-q,\,1}^{-q,\,2}\right)[\psi_q].
 \end{equation}
\end{remark}
\begin{coro}\label{cor6}
Now, with the aim of obtaining the explicit solution of Eq.~\eqref{eq:Psi_eq+travGen} under the condition in Eq.~\eqref{eq:Gamma_cond}, we apply, as stated in Theorem~\ref{Th:solutionsNRT}, the IGST to the function $\Psi(s)$ given implicitly in Eq.~\eqref{eq:par_sol}. In particular, substituting the parameters of the IGST in Eq.~\eqref{eq:par_sol}  into~\eqref{eq:invGSTs3a} we find that
	\[
	\Omega = a+\alpha\frac{1-b}\beta = (\gamma -1) + \gamma(1-2) = -1,
	\]
    and so 
    \[\psi(z)=\mathcal G(s)=K_3\exp\left(\int\frac{ds}{\Psi(s)}\right),\qquad z=K_2+ \int\frac{\mathcal G(s)^{1+\gamma}}{\Psi(s)^2}ds.\]
    As in the above remark, we highlight the case $\gamma=1-q$, for which $\Psi(s)$ does not depend on the parameter $q$, to find 
      \[\psi(z)=\mathcal G(s)=K_3\exp\left(\int\frac{ds}{\Psi(s)}\right),\qquad z= K_2+\int\frac{\mathcal G(s)^{2-q}}{\Psi(s)^2}ds \, .\]
      \end{coro}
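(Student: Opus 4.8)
The statement is obtained by unpacking the inverse transform of Definition~\ref{def:invGSTs} for the particular operator $\widetilde{\mathcal S}_{\gamma,1}^{\gamma-1,2}$ already identified in Eq.~\eqref{eq:par_sol} of Theorem~\ref{Th:solutionsNRT}. Accordingly, my plan is to set $\alpha=\gamma$, $\beta=1$, $a=\gamma-1$, $b=2$ throughout, keeping in mind that in this section $\Lambda=1$, and then to evaluate in turn the three ingredients of the IGST: the exponent $\Omega$ of~\eqref{eq:invGSTs3a}, the function $\mathcal G(s)$ of~\eqref{eq:invGSTs3}--\eqref{eq:invGSTs4}, and the change of independent variable~\eqref{eq:invGSTs2}.

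First I would compute $\Omega$. From~\eqref{eq:invGSTs3a}, $\Omega+1 = \bigl(\beta(a+1)+\alpha(1-b)\bigr)/\beta = \gamma + \gamma(1-2) = 0$, equivalently $\Omega = a+\alpha(1-b)/\beta = (\gamma-1)+\gamma(1-2) = -1$. The decisive point is that this lands us in the degenerate branch $\Omega=-1$, so that $\mathcal G$ is given by the exponential formula~\eqref{eq:invGSTs4} and not by~\eqref{eq:invGSTs3}.

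Next I would substitute into~\eqref{eq:invGSTs4}: since $\Lambda=1$ and $(1-b)/\beta = (1-2)/1 = -1$, one gets $\mathcal G(s)=K_3\exp\bigl(\int \Psi(s)^{-1}\,ds\bigr)$, which is the claimed expression for $\psi(z)=\mathcal G(s)$. Substituting the same parameters into~\eqref{eq:invGSTs2} (with the independent variable now written $z$ rather than $x$) I would use $a-\alpha b/\beta = (\gamma-1)-2\gamma = -(1+\gamma)$ and $b/\beta = 2$ to obtain $z = K_2+\int \mathcal G(s)^{1+\gamma}\Psi(s)^{-2}\,ds$. That these formulas really do invert the GST data of Theorem~\ref{Th:solutionsNRT} is guaranteed by Proposition~\ref{prop:inverse}, so no extra consistency check is required.

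Finally, for the highlighted sub-case I would put $\gamma=1-q$, so that $1+\gamma=2-q$ and the change of variable becomes $z = K_2+\int \mathcal G(s)^{2-q}\Psi(s)^{-2}\,ds$; the assertion that $\Psi(s)$ does not depend on $q$ here is precisely the observation made in the remark following Theorem~\ref{Th:solutionsNRT}, namely that under $\gamma=1-q$ the implicit relation~\eqref{eq:gen_sol_simply} defining $\Psi$ involves only $\overline g$. I do not anticipate any real obstacle: the sole non-mechanical step is recognizing that the parameter choice of Eq.~\eqref{eq:par_sol} sits exactly on the $\Omega=-1$ branch of Definition~\ref{def:invGSTs}, and everything else is bookkeeping of exponents.
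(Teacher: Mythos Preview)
Your proposal is correct and follows essentially the same route as the paper: the corollary is obtained by inserting the parameters $\alpha=\gamma$, $\beta=1$, $a=\gamma-1$, $b=2$, $\Lambda=1$ from Eq.~\eqref{eq:par_sol} into Definition~\ref{def:invGSTs}, observing that this lands on the $\Omega=-1$ branch, and then reading off $\mathcal G(s)$ from~\eqref{eq:invGSTs4} and $z(s)$ from~\eqref{eq:invGSTs2}. The specialization $\gamma=1-q$ and the appeal to the preceding remark for the $q$-independence of $\Psi$ are likewise exactly as in the paper.
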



\subsection{Subcase I: $\ \overline g(\psi^\gamma\psi')=\lambda(\psi^{1-q}\psi')^\xi$}

As an example, we focus on the particular case $\overline g(\psi^\gamma\psi')=\lambda(\psi^{1-q}\psi')^\xi$ with $\xi\in\mathbb{R}$, where we note $\gamma=1-q$: 
\begin{equation}\label{eq:cond_caseII}
\Xi(\psi, \psi')=\hbar \omega \psi'(z)-\lambda\frac{(\psi')^{2+\xi}}{\psi^{q(1+\xi)-\xi}}.
\end{equation}
Using Eq.~\eqref{eq:gen_sol_simply} we obtain
\begin{equation}
 s
=\frac{(\hbar k)^2}{2m\lambda}\int \Psi^{-\xi}d\Psi
=\frac{(\hbar k)^2}{2m\lambda(1-\xi)}\Psi(s)^{1-\xi},
\end{equation}
from where one can explicitly express $\Psi$ in terms of $s$,
\begin{equation}\label{eq:111}
    \Psi(s)=\left(\frac{2m\lambda(1-\xi)}{(\hbar k)^2}s\right)^{\frac1{1-\xi}}=\kappa_1 s^{\frac{1}{1-\xi}},
\end{equation}
with $\kappa_1=\left(\frac{2m\lambda(1-\xi)}{(\hbar k)^2}\right)^{\frac1{1-\xi}}.$
In order to obtain the explicit solution of Eq.~\eqref{eq:Psi_eq+travGen} under the condition in Eq.~\eqref{eq:cond_caseII}, we apply the IGST $\Psi(s)$, as stated in Theorem~\ref{Th:solutionsNRT}. Using Corollary~\ref{cor6}, recalling that $\Omega=-1$ and, moreover, $\gamma=1-q$, we obtain that $\mathcal G(s)$ is given by
\begin{equation}
\mathcal G(s)=K_3\exp\left(\int\frac{ds}{\Psi(s)}\right)=K_3\exp\left(\kappa_2\,s^{\xi/(\xi-1)}\right),
\end{equation} 
with $\kappa_2 = \frac{(\xi-1) }{\xi \kappa_1}$.  To continue, again using Corollary~\ref{cor6}, we have
\cite[Sec. 2.33 (10)]{Gradshteyn2014}
\begin{align}\label{eq:inversexs}
z(s)
& =K_2 + \int \frac{\mathcal G(s)^{2-q}}{\Psi(s)^2}\,ds
= K_2+\frac{K_3^{2-q}}{\kappa_1^2}\int s^{\frac{2}{\xi-1}} \exp\left((2-q)\kappa_2 s^{\xi/(\xi-1)}\right)ds
\nonumber \\
&= K_2 + \frac{K_3^{2-q}}{\kappa_1^2}\frac{1-\xi}{\xi\,\left((q-2)\kappa_2\right)^{\frac{\xi+1}{\xi}}}\,\Gamma\left(\frac{\xi+1}{\xi},(q-2)\kappa_2\,s^{\xi/(\xi-1)}\right)
\nonumber \\
&=K_2 + \kappa_3^{-1}\,\Gamma\left(\frac{\xi+1}{\xi},(q-2)\kappa_2\,s^{\xi/(\xi-1)}\right)
\end{align}
where $\kappa_3=\frac{\kappa_1^2\xi\,\left((q-2)\kappa_2\right)^{\frac{\xi+1}{\xi}}K_3^{q-2 }}{(1-\xi)}.$ From Eq.~\eqref{eq:inversexs}, we can easily compute $s(z)$ as
\begin{equation}
s(z)=\left(\frac{\Gamma^{-1}\left(\frac{\xi+1}{\xi},\kappa_3(z-K_2)\right)}{(q-2)\kappa_2}\right)^{\frac{\xi-1}{\xi}}. 
\end{equation}
Finally, the general solution, $\psi(z)=\mathcal{G}(s)$ is given by
\begin{equation}
\psi(z)= K_3 \exp\left(\frac{\Gamma^{-1}\left(\frac{\xi+1}{\xi},\kappa_3(z-K_2)\right)}{q-2}\right), \quad \xi\in\mathbb{R}\, .
\end{equation}

\subsection{Subcase II: $\Xi(\psi)=0$ }
The zero potential case has been analyzed by the authors in~\cite{GPPT25}. In this section, we show that the transformations defined here perform a shift in the nonlinearity parameter, $q$, when they are applied to the wavefunction of  the traveling wave solution of Eq.~\eqref{eq:Psi_eq+travGen} with $\Xi(\psi,\psi')=0$. 
\begin{coro}[Traveling wave solutions]
	The traveling wave solution of the NRT nonlinear Schrödinger equation satisfies~\cite[Eq.(11)]{BN16}
	\begin{equation}\label{eq:freeNRT}
	\psi'(z)=k_1\psi^q(z)+k_2\psi^{q-1}(z),
	\end{equation}
	where $k_1 =\frac{2m\omega }{\hbar k^2}$ and $k_2$ depends on $k_1$ (Eq. (12) in~\cite{BN16}). This equation  is a particular case of equation~\eqref{eq:EDO} with $l=1,m=q,n=q-1.$ If we select $b=0$ in Lemma 2, then 
	$$
	L=1,\quad M=\frac{q+\alpha-a-1}{\alpha},\quad N=\frac{q-1+\alpha-a-1}{\alpha}=M-\frac1\alpha.
	$$
	Imposing $\alpha=1$ we obtain
	$$
	L=1,\quad M=q-a = \widetilde q,\quad N=(q-a)-1 = \widetilde q -1, 
	$$ 
		corresponding to the equation,
		\begin{equation}\label{eq:121}
		\dot\Psi(s)=k_1 \Psi(s)^{\widetilde q}+k_2 \Psi(s)^{\widetilde{q}-1}.
		\end{equation}

	Summarizing, the selection $\alpha=1, b=0$ in Lemma~\ref{lemma:beta0_EDOs} implies a shift in the non-linearity parameter of the traveling wave solutions from $q$ to $\widetilde{q}=q-a$, i.e.,
	\begin{equation}\label{eq:GST_travel_sols}
	\mathcal S_{1}^{a,0}[\psi_q]= \Psi_{\widetilde q}(s),\quad\text{or equivalently},\quad  \mathcal S_{1}^{a,0}[\psi_{q+a}]= \Psi_{q}(s),
	\end{equation}
where $\psi_q$ denotes the solution of Eq.~\eqref{eq:freeNRT}, and similarly $\Psi_{\tilde q}$ of~\eqref{eq:121}, 	and so
	\begin{equation}
	\mathcal S_{1}^{1-q,0}[\psi_1 ] = \Psi_{q}(s)
	\end{equation}
	where we have chosen $a=1-q$, and $\psi_1(z)$ is the solution of the first order linear equation $\psi' = k_1 \psi + k_2$, i.e.,
		\[
		\psi_1 (z) = K e^{k_1 z} - \frac{k_2}{k_1}\, , 
		\]
		with $K$ being an integration constant. That is, $\Psi_{q}$ may be expressed as a transformation of $\psi_1$.
	
\end{coro}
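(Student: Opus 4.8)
The plan is to verify the chain of substitutions claimed in the Corollary by tracking how Lemma~\ref{lemma:beta0_EDOs} specializes. First I would start from Eq.~\eqref{eq:freeNRT}, the first order equation $\psi'(z)=k_1\psi^q+k_2\psi^{q-1}$, and recognize it as Eq.~\eqref{eq:EDO} with the identifications $l=1$, $A=k_1$, $m=q$, $B=k_2$, $n=q-1$. Then I would plug $l=1$, $m=q$, $n=q-1$, together with the chosen parameter values $b=0$ and $\alpha=1$, into the formulae~\eqref{eq:lL} for $L$, $M$, $N$. With $b=0$ we get $L=l/(1-b)=1$; with $\alpha=1$ we get $M=m+l(\alpha-a-1)/(\alpha(1-b))=q+(1-a-1)=q-a$ and similarly $N=(q-1)-a=(q-a)-1$. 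This establishes that the transformed function $\Psi(s)=\,_{\alpha^{-1}}\mathcal S_\alpha^{a,0}[\psi]=\,_{1}\mathcal S_1^{a,0}[\psi]=\mathcal S_1^{a,0}[\psi]$ (recalling $\Lambda=\alpha^{-1}=1$ here, so the subscript drops) satisfies Eq.~\eqref{eq:121}, $\dot\Psi=k_1\Psi^{\widetilde q}+k_2\Psi^{\widetilde q-1}$ with $\widetilde q=q-a$. This is precisely the content of Lemma~\ref{lemma:beta0_EDOs} applied in this case, and since the coefficients $A=k_1$ and $B=k_2$ are preserved (thanks to the choice $\Lambda=\alpha^{-1}$, as noted in the Remark after that Lemma), no rescaling of $s$ is needed.

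Next I would record the shift statement: since the transformed equation~\eqref{eq:121} has the same form as~\eqref{eq:freeNRT} but with $q$ replaced by $\widetilde q=q-a$, we may write $\mathcal S_1^{a,0}[\psi_q]=\Psi_{\widetilde q}$, i.e., $\mathcal S_1^{a,0}[\psi_q]=\Psi_{q-a}$. Re-indexing by replacing $q$ with $q+a$ throughout gives the equivalent form $\mathcal S_1^{a,0}[\psi_{q+a}]=\Psi_q$, which is Eq.~\eqref{eq:GST_travel_sols}. Finally, to obtain the last displayed identity I would specialize $a=1-q$, so that $q+a=1$ and $\widetilde q=q-a=2q-1$\,---\,wait, more carefully: with the re-indexed form $\mathcal S_1^{a,0}[\psi_{q+a}]=\Psi_q$ and the choice $a=1-q$, the subscript $q+a=q+(1-q)=1$, giving $\mathcal S_1^{1-q,0}[\psi_1]=\Psi_q$. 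It then remains only to identify $\psi_1$: setting $q=1$ in Eq.~\eqref{eq:freeNRT} collapses it to the first order linear ODE $\psi'=k_1\psi+k_2$, whose general solution is $\psi_1(z)=Ke^{k_1 z}-k_2/k_1$ by the standard integrating-factor computation (valid since $k_1\neq0$ in the regime considered).

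I do not anticipate a genuine obstacle here; the Corollary is essentially a bookkeeping specialization of Lemma~\ref{lemma:beta0_EDOs} combined with an elementary linear ODE solution. The one point requiring a little care is the role of the parameter $\Lambda$: Lemma~\ref{lemma:beta0_EDOs} is stated with $\Lambda=\alpha^{-1}$, so with $\alpha=1$ we indeed have $\Lambda=1$, consistent with the running convention $\Lambda=1$ of Sections 3 and 4; one should flag explicitly that it is this choice of $\Lambda$ that keeps $k_1$ and $k_2$ unchanged in the transformed equation, so that the shift $q\mapsto q-a$ is ``clean'' and no redefinition of $s$ intervenes. A secondary subtlety worth a sentence is that $k_2$ depends on $k_1$ (via Eq.~(12) of~\cite{BN16}), so the transformed equation~\eqref{eq:121} is a bona fide member of the same one parameter family; but since the transformation acts only through the exponents and not through $k_1$, this dependence is carried along automatically and causes no trouble.
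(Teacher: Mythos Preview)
Your proposal is correct and follows essentially the same approach as the paper: the Corollary is a direct specialization of Lemma~\ref{lemma:beta0_EDOs} with $l=1$, $m=q$, $n=q-1$, $b=0$, $\alpha=1$ (and hence $\Lambda=\alpha^{-1}=1$), together with the elementary solution of the linear ODE at $q=1$. The paper presents these computations inline within the Corollary rather than as a separate proof, and your added remarks on the role of $\Lambda$ and the $k_2$-dependence are appropriate clarifications.
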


\begin{coro}\label{coro:8}
Let $W_p[y]$ denote the functional $W_p[y] = \int y(z)^p\, dz$, over a suitable domain, and let $\Psi_q(s)$ and $\psi_{q+a}(z)$ be two traveling wave solutions of Eq.~\eqref{eq:freeNRT} related as shown in~\eqref{eq:GST_travel_sols} through the GST given in~\eqref{GST}. Then the following equality is satisfied:	
		\begin{equation}\label{eq:normp1}
		W_p [\Psi_q] = W_{\tilde{p}}[\psi_{\tilde{q}}], \quad \tilde{p} = p+a, \,\,\, \tilde q = q+a \, .
		\end{equation}
		In particular, for $a=1-q$,
		\begin{equation}\label{eq:normap2}
		W_p [\Psi_q] = W_{\tilde{p}}[\psi_{1}], \quad \tilde{p} = p-q+1  \, .
		\end{equation}	
\end{coro}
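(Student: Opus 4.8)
The plan is to reduce the claimed equality of functionals to a change-of-variables computation governed by the defining relation of the GST in Eq.~\eqref{GST}. Recall that the transformation relating $\psi_{q+a}(z)$ and $\Psi_q(s)$ is $\,_{\Lambda}\mathcal S_1^{a,0}$ with $\alpha=1$, $b=0$, and (from Lemma~\ref{lemma:beta0_EDOs}) $\Lambda=\alpha^{-1}=1$, so that $\Psi_q(s)=\psi_{q+a}(z)$ and $ds = \psi_{q+a}^a(z)\,dz$. The key observation is that raising both sides of $\Psi_q(s)=\psi_{q+a}(z)$ to the power $p$ and multiplying by $ds$ yields $\Psi_q(s)^p\,ds = \psi_{q+a}(z)^p\cdot\psi_{q+a}(z)^a\,dz = \psi_{q+a}(z)^{p+a}\,dz$, which is exactly the integrand identity needed.

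First I would write out $W_p[\Psi_q] = \int \Psi_q(s)^p\,ds$ and substitute $s = s(z)$ using the change-of-variable relation $ds = \psi_{q+a}(z)^a\,dz$ from Eq.~\eqref{GST} (with the parameter choices $\alpha=1$, $b=0$, $\Lambda=1$ as in the preceding Corollary). Then I would replace $\Psi_q(s(z))$ by $\psi_{q+a}(z)$ using Eq.~\eqref{eq:GST_travel_sols}. Combining the two gives
\begin{equation*}
W_p[\Psi_q] = \int \psi_{q+a}(z)^p\,\psi_{q+a}(z)^a\,dz = \int \psi_{q+a}(z)^{p+a}\,dz = W_{p+a}[\psi_{q+a}] = W_{\tilde p}[\psi_{\tilde q}],
\end{equation*}
with $\tilde p = p+a$ and $\tilde q = q+a$, which is Eq.~\eqref{eq:normp1}. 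The special case Eq.~\eqref{eq:normap2} then follows immediately by setting $a = 1-q$, so that $\tilde q = 1$ and $\tilde p = p-q+1$, and invoking the identification $\psi_1(z) = Ke^{k_1 z} - k_2/k_1$ from the previous Corollary.

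The only genuinely delicate point is the treatment of the domain of integration and the constant of integration in $s=s(z)$: since $s$ is defined only up to an additive constant (Remark~\ref{rem_GST_2}), one should note that $W_p$ is understood over "a suitable domain" as stated, and that the change of variables is a bijection on that domain because $s'(z) = \psi_{q+a}^a(z)$ has a fixed sign wherever $\psi_{q+a}$ is nonzero (the traveling wave solution being monotone and sign-definite, as follows from Eq.~\eqref{eq:freeNRT}). Beyond that caveat, the argument is a one-line substitution; there is no real obstacle, the content being entirely in the structure of the GST already established.
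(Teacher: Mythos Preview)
Your argument is correct and follows essentially the same route as the paper: both proofs amount to the one-line change of variables using $\Psi_q(s)=\psi_{q+a}(z)$ and $ds=\psi_{q+a}(z)^a\,dz$ from the transformation $\mathcal S_1^{a,0}$. Your additional remarks on the domain and the monotonicity of $s(z)$ go slightly beyond what the paper records, but the core computation is identical.
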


	\begin{proof}
		We will prove the general case, i.e., Eq.~\eqref{eq:normp1}:
		\begin{equation*}
		W_p [\Psi_q] = \int \Psi_q(s)^p\, ds =  
		\int \psi_{q+a}(z)^p \psi_{q+a}(z)^{a} \, dz =
		\int \psi_{q+a}(z)^{p+a} \, dz = W_{p+a} [\psi_{q+a}] \, ,
		\end{equation*}
where we have used the two equations of the transformation $S_{1}^{a,0}$, i.e.,
${\Psi_q}(s)=\psi_{q+a}(z)$ and $ds=\psi_{q+a}(z)^{a} \, dz$.
	\end{proof}

\subsection{Subcase III: $\Xi(\psi)=\lambda \psi^q$}
Rego-Monteiro obtains the following equation for the wavefunction of the nonlinear NRT-Schrödinger equation under the action of a constant potential, $V(z)=\lambda$ with $\lambda\neq 0$ (see~\cite[Eq.(10)]{REG20b}), corresponding to Eq.~\eqref{eq:Psi_eq+travGen} with $\Xi(\psi)=\lambda \psi^q$:
\begin{equation}\label{eq:RM_loren}
\psi'=K_4\psi^{1-q}\left(\psi ''+(1-q)\frac{\psi'^2}{\psi}\right)+K_5\psi^q
\end{equation}
where $K_4 = \frac{\hbar k^2}{2m\omega}$ and $K_5 = \frac{\lambda}{\hbar \omega}$. Eq.~\eqref{eq:RM_loren} may be written
\begin{equation}
\psi ''+(1-q)\frac{\psi'^2}{\psi}
=\psi^{q-1}\psi' \left(\frac{1}{K_4}-\frac{K_5}{K_4}\frac{\psi^q}{\psi'}\right),
\quad
\end{equation}
which has the same structure as Eq.~\eqref{eq:EDO1} in Proposition~\ref{prop:sols1} with
\begin{equation}
\mathcal K=1-q, \quad
A=q-1,\quad B=1,\quad \alpha=q,\quad \beta=-1\, .
\end{equation}
In addition, the necessary condition, $\frac \alpha\beta=\frac{A+1}{B-2},$ in Proposition~\ref{prop:sols1} is fulfilled for any value of the non-linearity parameter $q$ since
\begin{equation}
\frac \alpha\beta=-q=\frac{A+1}{B-2}.
\end{equation}
Then, by virtue of Proposition \ref{prop:sols1} we have that  


\begin{equation}\label{eq:RM_sol1}
\psi(z)= \widetilde {\mathcal S}_{q,\,-1}^{a,b}\left[\Psi \right],\qquad s =\int \frac{K_4}{(K_5\Psi^2-\Psi+K_4)\Psi^{b}}\,d\Psi,
\end{equation}
with
\begin{equation}
b=1-\frac{a+1}{q} \, .
\end{equation}

To express the solution in a simpler way we select $a=q-1$, so $b=0$. 
In this way, the integral in Eq.~\eqref{eq:RM_sol1} can be easily evaluated:
\[
s =\int \frac{K_4}{(K_5\Psi^2-\Psi+K_4)}\,d\Psi=\frac{2 K_4}{\kappa} \arctan\left(\frac{2K_5 \Psi - 1}{\kappa}\right),\qquad \kappa=\sqrt{4 K_4K_5-1}.\]
Inverting, one obtains 
\begin{equation}
\Psi(s)=\frac1{2K_5}\left(\kappa \tan\left(\frac{\kappa\,s}{2K_4}\right)+1\right).
\end{equation} 
Now, to apply the inverse generalized Sundman transform, $\GSTIs$, we first note that $\beta(a+1)+\alpha(1-b)=0$ and so $\Omega =-1$, which implies that the function $\mathcal G$ involved in Definition~\ref{def:invGSTs} takes the form 
\begin{equation}
\mathcal G(s)=K_3\exp\left(\int \frac{ds}{\Psi(s)}\right),
\end{equation} 
where $K_3$ is an integration constant. One thus obtains
\begin{equation}\label{eq:class_sol}
\mathcal G(s)
=K_3 e^{\frac{1}{2K_4}s}\left(\cos\left(\frac{\kappa  s}{2 K_4}\right) + \kappa \sin\left(\frac{\kappa  s}{2 K_4}\right)\right).
\end{equation}

Further computations provide the following expression for the change of variables:
\begin{align}
z(s) & 
= K_2 + 2 K_4 K_3^{1-q} \frac{c_{\kappa}^{1-q}}{\kappa} e^{-\frac{1-q}{\kappa}\phi_{\kappa} }\int e^{\frac{1-q}{\kappa}t}[\cos(t)]^{1-q} \, dt  \, ,\label{eq:psiz3}
\end{align}
where $t=\frac{\kappa}{2K_4}s+\phi_{\kappa}$, $c_{\kappa} = \sqrt{1+\kappa^2}$ and $\phi_{\kappa} = -\arctan(\kappa)$.

\begin{remark}
    It is worth highlighting that, since both the direct and inverse GSTs depend on the non-linearity parameter $q$ because of the choice $a=q-1$ above, this dependence is lost for the functions $\Psi(s)$ and $\mathcal G(s)$, and is then recovered later by the solution $\psi(z)$. In this sense, we observe that  this choice of the parameters means that the dependence on the non-linearity parameter $q$ can be fully encapsulated in the generalized Sundman transformation, as expressed in Eq.~\eqref{eq:RM_sol1}.
    As a direct consequence,
    \begin{equation}
        \Psi(s)={\mathcal S}_{q,\,-1}^{q-1,0}[\psi_q(z)]={\mathcal S}_{\overline q,\,-1}^{\overline q-1,0}[\psi_{\overline q}(z)]
    \end{equation}
    for any $q$ and $\overline q$, and so from Proposition~\ref{prop:inverse} it follows that
\begin{equation}
\psi_{\overline q}(z)
=\left(\widetilde{\mathcal S}_{\overline q,\,-1}^{\ \overline q-1,0}\circ{\mathcal S}_{q,\,-1}^{q-1,0}\right)[\psi_q(z)],
    \end{equation}
    providing a shift transformation in the non-linearity parameter $q$ for the solutions.
\end{remark}

\begin{remark}
With respect to the NRT nonlinear Schrödinger equation, the solution of the free-particle potential cannot be obtained as a limiting case ($\lambda\to 0)$ of the constant potential, $\Xi(\psi) = \lambda \psi^q$, which means that they have to be considered as two different scenarios. This interesting behavior differs from the classical Schrödinger equation, in which a non-zero constant potential, $V_0=\lambda$, is equivalent to the free-particle potential since the former can be considered as a shift in the energy of the system.
\end{remark}


\begin{remark}
The solution of Eq.~\eqref{eq:RM_loren}, $\psi_q(z)$, can be expressed as a deformation of the classical solution, i.e., for $q=1$. More precisely, from Eq.~\eqref{eq:RM_sol1} and Definition~\ref{def:invGSTs} one has that the general solution for Eq.~\eqref{eq:RM_loren} is given by 
\begin{align}
\psi_q(z) & =\mathcal G(s),\\[0.5em]
z(s) & = K_2 + \int [\mathcal G(s)]^{1-q}\,ds \, .
\end{align}
Setting $q=1$, we obtain $\psi_1(z)={\mathcal G}(s)$ and $s=z$ (up to a constant), so
$G(s)$ is precisely the solution for $q=1$. The above two equations are therefore
equivalent to
\begin{align}
\psi_q(z) & =\mathcal \psi_1(s),\label{eq:psiz} \\[0.5em]
dz & =[\psi_1(s)]^{1-q}\,ds \, ,\label{eq:psiz2}
\end{align}
which corresponds to 
\begin{equation}\label{eq:440}
\psi_q(z)={\mathcal S}_{1}^{1-q,0}[\psi_1(s)].
\end{equation}
Thus the  nonlinear solution, $\psi_q(z)$, is related to the linear solution, $\psi_1(s)$, through  the subfamily of generalized Sundman transform of power-type defined in Eq.~\eqref{GST} with parameters $\alpha=1, a=1-q$ and $b=0$ (we note that, since $b=0$, this is in fact a Sundman transformation).
From Eq.~\eqref{eq:440} and the composition property in Eq.~\eqref{eq:comp_beta0} one easily derives
\begin{equation}
\mathcal S_1^{1-\overline q,0}[\psi_q(z)]=\psi_{q+\overline q-1}(s).
\end{equation}
\end{remark}
%
%

\begin{remark} There exists the following relation between $W_p[\psi_q]$ and  $W_{\tilde{p}}[\mathcal{G}]$ (see Corollary~\ref{coro:8}):
	$$
	W_p[\psi_q] = \int [\psi_q(z)]^p\, dz=\int \mathcal G(s(z))^p\,dz=
	\int [\mathcal G(s(z))]^{p+1-q}[\mathcal G(s(z))]^{q-1}\,dz=
	\int [\mathcal G(s)]^{p+1-q}\, ds  =W_{p+1-q}[\mathcal{G}]\, . 
	$$
	Thus,
	\begin{equation}
	W_{p}[\psi_q]=\ W_{\tilde{p}}[\mathcal{G}]   \, , \quad \tilde{p} = p+1-q\, . 
	\end{equation}
\end{remark}

\section{Conclusions and open problems}

In this paper we have defined a generalization of the power-type Sundman transformation, involving not only powers of the function but also of its derivative, along with an inverse transformation. These transformations also constitute a generalization of transformations, called birametric up/down transformations, recently introduced by one of the authors in the framework of information theory and which have proved to be a useful tool in generalizing sharp informational inequalities. Furthermore, we have employed these
new transformations, now depending on four parameters, in a totally different context to that considered previously: the study of ordinary differential equations. A particularly interesting feature of these transformations is that, for certain classes of autonomous equation of order $n\geq2$, the resulting transformed equations are also autonomous but of order $n-1$. In this paper we have seen how they may be used to reduce a certain family of second order autonomous nonlinear ordinary differential equations to first order autonomous ordinary differential equations. Applications to higher order equations will be considered in future papers.
	
As an application of these new transformations, we note that the family of equations considered in this paper includes equations of interest in the framework of the NRT nonlinear Schr\"odinger equation. This fact allows us to not only find an explicit expression for the general solution of some reductions recently studied by Rego-Monteiro with Lorentzian form in the case of a constant potential, but also to encapsulate the whole nonlinear behavior of the solutions in the transformations introduced in this paper. Specifically, we can write the solution as a transformed function of the solution of the linear case. In addition, the composition of transformations has been seen to perform a shift in the nonlinearity parameter $q$. Moreover, as an interesting byproduct, we have shown that a certain subfamily of these new transformations has a group structure, and, at the same time, performs a shift on the parameters of the so-called generalized trigonometric functions introduced by Drábek and Manásevich. Very recently, we have used this class of special functions to express the general solution of the NRT nonlinear Schr\"odinger equation in the free particle case under the traveling wave ansatz.

In later papers, in addition to the application of these new transformationms to higher order autonomous ordinary differential equations, we will further consider their role in the framework of the NRT nonlinear Schr\"odinger equation, i.e., beyond the free particle and constant potential solutions analyzed here, as well as their use in the study of other families of equations of interest. A further relevant question is with respect to finding expressions for the auxiliary field, at least as asymptotic approximations.

\section*{Acknowledgements}

{We gratefully acknowledge the following financial support:
Project PID2020-115273GB-I00 and Grant RED2022-134301-T funded by 
MCIN/ AEI/ 10.13039/501100011033. PRG and AP also gratefully acknowledge 
financial support from the Universidad Rey Juan Carlos as members of the 
grupo de investigación de alto rendimiento DELFO.
\newpage

\appendix

\newpage

\bibliographystyle{unsrt}

\end{document}